\documentclass[runningheads]{llncs}
\usepackage[T1]{fontenc}
\usepackage{graphicx}
\usepackage{float}
\usepackage{bbm}
\usepackage{stmaryrd}
\usepackage{graphicx,wrapfig,lipsum}
\usepackage{mathtools}
\usepackage{wasysym}
\usepackage{xcolor}
\usepackage{bussproofs}
\usepackage[hypertexnames=false]{hyperref}
\usepackage{stmaryrd}
\usepackage{amsfonts}
\usepackage{amssymb}
\usepackage{mathpartir}
\usepackage{mathrsfs}
\usepackage{tikz,pgfplots}
\usepackage{enumitem}
\usepackage{orcidlink}
\usepackage{url}
\usepackage{listings}
\definecolor{kwpurple}{RGB}{127,0,204}   
\definecolor{typegreen}{RGB}{0,140,0}    
\definecolor{strred}{RGB}{200,0,0}       
\definecolor{lnum}{RGB}{60,60,60}        
\definecolor{graymeta}{RGB}{90,90,90}    
\lstdefinelanguage{mydsl}{%
  sensitive=true,
  keywords={let,rec,node,in,if,then,else,fby,models,exec,reset,init,der},
  keywordstyle=\color{kwpurple}\bfseries,
  emph={float,int,bool,real},
  emphstyle=\color{typegreen}\bfseries,
  morestring=[b]",
  stringstyle=\color{strred},
  morecomment=[l]{--},
  commentstyle=\color{lnum}\itshape
}

\newif\ifdraft
\draftfalse

\newcommand{\jiawei}[1]{\ifdraft\textsf{\color{purple}[#1 --Jiawei]}\fi}

\newcommand{\kw}[1]{{\sf #1}}
\pgfplotsset{compat=1.18}
\begin{document}
\title{Towards Formal Verification of Hybrid Synchronous Programs with Refinement Types}

%
\author{Serra Z. Dane\inst{1}\orcidlink{0009-0008-9028-6597} \and
Jiawei Chen\inst{1}\orcidlink{0000-0002-5461-6711} \and Marc Pouzet\inst{2}\orcidlink{0000-0002-2651-7708} \and
Jean-Baptiste Jeannin\inst{1}\orcidlink{0000-0001-6378-1447}}
\authorrunning{Dane et al.}
\titlerunning{Towards Formal Verification of Hybrid Synchronous Programs}
%
\institute{University of Michigan, Ann Arbor, MI, USA \\ \email{\{sdane, chenjw, jeannin\}@umich.edu}\and ENS, PSL University and INRIA, Paris, France\\ \email{marc.pouzet@ens.fr}}
%
\maketitle              
\begin{abstract}
Cyber-physical systems (CPS) such as autonomous cars, aircraft, and robots are often also safety-critical; thus it is imperative that they operate as intended with a high degree of certainty. Formal verification has been employed to verify the software controlling these systems, but due to their complexity, is usually performed on an abstract model rather than the executable code. Synchronous programming languages extended with differential equations promise both rigorous modeling and sufficient expressiveness to implement executable controller code, and recent developments have introduced formal verification of strictly discrete-time programs. Extending these verification techniques to hybrid systems enables precise modeling of the environment for a wider variety of programs to be both verified and executed. We formalize the operational semantics of initial value problems and zero-crossing detection expressed in a synchronous programming language, extend its type system for verification thereof, and prove its soundness.


\keywords{Hybrid Systems  \and Synchronous Programming \and Type Theory \and Semantics \and Cyber-Physical Systems.}
\end{abstract}

\section{Introduction}
Cyber-physical systems (CPS), systems in which software operates in a physical environment, are fundamental for modeling real-world systems such as robots and embedded systems. Because of their physical effects, the software running these systems is often safety-critical. Rigorous techniques such as formal methods are preferred over testing due to the complex interactions present. While formal verification is typically done on an abstract mathematical model of the system, direct verification of the executable code controlling a CPS provides a more convincing result. This necessitates the development of semantics that can accurately characterize the program. However, this is complicated by the fact that CPS are often hybrid: the controller runs in discrete, synchronous steps, while the physical environment evolves continuously (often modeled by ODEs), so the overall behavior mixes discrete updates with continuous-time dynamics. A promising direction is therefore to bring verification and implementation closer together in a single programming language with precise hybrid semantics, so that verified properties directly correspond to the behavior of executable controller code. The programming language for writing models must be mathematically precise (that is, with unambiguous semantics and implementation) such that all validation steps can be performed on the source, to give confidence on the model itself and to automatically transfer some of the properties to the implementation.

Synchronous programming languages provide a strong foundation for this goal such as Zélus~\cite{bourke2013zelus}, Lustre~\cite{caspi1987lustre}, Esterel~\cite{berry1992esterel}, Signal~\cite{leguernic1991signal} and SCADE~\cite{colaco2017scade6}. Their stream-based, deterministic model of computation matches the structure of many control programs, and their discipline of logical time aligns well with embedded execution. Moreover, modern synchronous languages such as Zélus extend this discrete-time core with continuous-time dynamics expressed as ordinary differential equations (ODEs), enabling programmers to model hybrid CPS in a unified setting where discrete control logic coexists with continuous evolution \cite{bourke2013zelus}.  In such hybrid synchronous programs, the connection between discrete and continuous behavior is often mediated by zero-crossings: discrete events occur when a guard function over the continuous state crosses zero, triggering resets or mode changes. Zero-crossings are thus central to both modeling and execution.

Recent work has shown that refinement types can be used to verify safety properties of discrete-time synchronous programs directly in the source language. Invariants are encoded as type refinements and discharging the resulting proof obligations with SMT solving \cite{chen_synchronous_2024}. However, extending this approach from discrete-time streams to hybrid programs raises fundamental semantic and proof-theoretic challenges. Continuous behavior exists outside of discrete-time computation so safety must be shown for both continuous and discrete time, and zero-crossings need to be defined precisely enough to support a sound theory. At the same time, continuous-time verification benefits from proof principles that differ from purely discrete induction; in particular, differential invariants, as developed in differential dynamic logic, provide a well-established way to prove that a property is preserved along ODE solutions \cite{platzer2008dl}. 

 This paper develops a refinement-type-based verification framework for hybrid synchronous programs that supports execution while providing formal guarantees against safety specifications. We present the following contributions:
\begin{enumerate}
    \item A formal, modular definition of zero-crossings for hybrid synchronous programs.
    \item Operational semantics and typing rules for verifying a synchronous programming language with continuous dynamics and event-triggered resets based on the hybrid synchronous language Zélus.
    \item A definition and proof of type safety for the hybrid synchronous refinement type metatheory.
\end{enumerate}

\section{Preliminaries}\label{sec:prelim}
In this section, we present the preliminaries required for the development of our theory.
\subsection{Synchronous Programming}
Synchronous languages model computations as \emph{streams} whose values evolve over
\emph{logical time}~\cite{caspi1987lustre,berry1992esterel,leguernic1991signal,colaco2017scade6}. A program reacts in discrete logical instants (clock cycles), producing a value for
each stream at every instant, and all computations within an instant must complete before
the next instant begins. This yields a deterministic model of computation that
is well suited to control software and embedded systems where predictability is essential.
Under the \emph{synchronous hypothesis}, each reaction is treated as instantaneous at the level
of logical time, providing deterministic interaction points with the environment~\cite{berry1989realtime}.
This discipline also supports modular compilation and compositional reasoning about stream-based
controllers~\cite{benveniste2003synchronous12}. These properties make synchronous programming a
natural foundation for hybrid systems: discrete controller updates occur at logical instants, while
continuous plant evolution (and event detection) can be analyzed over the real-time intervals between
reactions.

\subsection{Hybrid Systems}
Hybrid systems combine \emph{discrete} control with \emph{continuous} physical dynamics. A standard
mathematical model is the \emph{hybrid automaton}: a finite set of modes, an ODE (flow) associated with
each mode, and guarded discrete transitions that may reset the continuous state
\cite{alur1995hybrid,henzinger1996theory}. Executions alternate between continuous evolution that
follows the active mode’s ODE and instantaneous jumps that change mode and/or apply a reset.
This model is well suited to CPS: controllers update at discrete instants (e.g., periodic sampling
and actuation), while the plant evolves in continuous time between updates. In this work we adopt a
hybrid \emph{synchronous} programming perspective via Z\'elus, which extends a Lustre-style synchronous
language with ODE-based continuous evolution and event-triggered resets, allowing discrete control
and continuous dynamics to be expressed and executed within a single source language
\cite{benveniste_hybrid_2011,bourke2013zelus}.

\subsection{Differential Dynamic Logic (dL)}
Our type system relies on proof rules for reasoning about continuous evolution in the style of
Platzer's \emph{differential dynamic logic} (dL) \cite{platzer2018lfcps}. In dL, an ODE evolution has the form
\[
  \dot x = f(x)\ \&\ Q(x),
\]
where $\dot x$ denotes the time derivative of the state $x(t)$ along a trajectory, i.e.,
$\dot x(t)=\frac{d}{dt}x(t)$. Here the state variable $x$ evolves continuously according to the vector field $f$ for an arbitrary duration, \emph{restricted} to times during which the evolution-domain constraint $Q$ holds.
Modal formulas $[\alpha]\phi$ (resp. $\langle\alpha\rangle\phi$) state that $\phi$ holds after all (resp. some) terminating runs of program $\alpha$; we consider only the former case.

A central proof technique we adopt is the \emph{differential invariant} rule. Informally, to prove that a predicate
$P$ holds throughout an ODE evolution restricted to $Q$, it suffices to show that $P$ is preserved differentially while $Q$ holds:
\begin{mathpar}
\inferrule*[right=(dI)]
{
  Q \vdash [ \dot x:=f(x)]\dot P
}
{
  P \vdash [\dot x = f(x)\ \&\ Q\,]\ P
}
\end{mathpar}
Here $[\,\alpha\,]\phi$ means $\phi$ holds after all terminating runs of $\alpha$, and
$\dot x=f(x)\ \&\ Q$ denotes continuous evolution $\dot x=f(x)$ restricted to states satisfying $Q$.
The formula $\dot P$ is the symbolic time derivative of $P$ along the flow, and $\dot x:=f(x)$ assigns the
derivative symbol $\dot x$ to $f(x)$ to reduce checking $\dot P$ to an algebraic condition under $Q$.
Note that this differential evaluation can stop even though $Q$ does not become $false$, since in dL an ODE command is non-deterministic in its duration~\cite{platzer2018lfcps}. Unlike dL, where $Q$ explicitly restricts the ODE and there is no determined duration of evaluation, adapting this rule to zero-crossings requires additional safeguards, established in subsequent sections.

\subsection{Refinement Types}
Refinement types strengthen a base type \(b\) with a logical predicate \(\varphi\), written
\(\{v:b \mid \varphi(v)\}\), to specify subsets of values of \(b\) satisfying \(\varphi\). For example, positive reals can be written \(\{v:\mathsf{real}\mid v>0\}\)~\cite{jhala2020refinement}. Operationally, refinement typing supports generating verification conditions that can be discharged
by automated solvers, while keeping the core language type system largely unchanged \cite{rondon2008liquid}. We use refinements to
express \emph{safety} properties of hybrid state variables and to encode inductive invariants that must
hold across both continuous evolution and discrete resets.

\subsection{Verification Via Typing}
Verification via typing uses a type system as a lightweight, compositional proof discipline: a program that type-checks is guaranteed to satisfy a class of semantic safety properties, with the guarantee justified by a \emph{type soundness} theorem relating typing to the language’s operational semantics \cite{wright1994syntactic}. Liquid Types combine Hindley--Milner inference with predicate abstraction to automatically infer refinements strong enough to prove program invariants \cite{rondon2008liquid}. Other approaches internalize program logics into types: Hoare Type Theory integrates Hoare-style pre/postconditions into the type of effectful computations, enabling modular reasoning about stateful code within the type system \cite{nanevski2008hoare}. Typing-based verification has also been explored in the context of synchronous programming, where refinement types are used to relate executable models and safety specifications within a unified language and metatheory \cite{chen_synchronous_2024}. However, this line of work focuses on a restricted fragment of the language limited to purely synchronous constructs, leaving open the challenge of handling richer hybrid features addressed in the present work.

\section{Formal Definition of Zero-Crossings}\label{sec:zc}
In this section we present a formal definition of \emph{zero-crossings} based on the observable
behavior of real-valued guard functions. Intuitively, a zero-crossing occurs when a function
$g(t)$ changes sign, i.e., it passes from negative to positive or vice versa. In hybrid systems modelers, however, such sign tests are typically not primitive semantic constructs. Conditions
like $x \ge 0$ are implemented via an internal mechanism known as \emph{zero-crossing detection}~\cite{cellier1991continuous,benveniste2012nonstandard,lee2005operational}.

There does not appear to be a single standard formal definition for zero-crossings: candidate definitions based solely on sign change disagree on subtle
behaviors near zero, leading to implementation-dependent behavior and ambiguity in formal reasoning~\cite{benveniste2012nonstandard}.
Since Zélus depends on zero-crossings to trigger discrete computation, this definition is fundamental to all the subsequent formalism.


\paragraph{Notation.}
To make our results as general as possible, we introduce a new notation. For any $a\in\{-\infty\}\cup\mathbb{R}$ and $b\in\mathbb{R}\cup\{+\infty\}$, we write $\llparenthesis a;b \rrparenthesis$ for:
\begin{itemize}
\item $\llparenthesis a;b \rrparenthesis \triangleq [a;b]$ if $a\in\mathbb{R}$ and $b\in\mathbb{R}$;
\item $\llparenthesis a;b \rrparenthesis \triangleq (-\infty;b]$ if $a=-\infty$ and $b\in\mathbb{R}$;
\item $\llparenthesis a;b \rrparenthesis \triangleq [a;+\infty)$ if $a\in\mathbb{R}$ and $b=+\infty$;
\item $\llparenthesis a;b \rrparenthesis \triangleq (-\infty;+\infty)$ if $a=-\infty$ and $b=+\infty$.
\end{itemize}
Note that $\llparenthesis a;b \rrparenthesis\subseteq\mathbb{R}$ and cannot include $-\infty$ or $+\infty$.

\paragraph{Basic hypotheses and setup.}
Throughout this section, let $f$ be a function defined as {\bf continuous} on $\llparenthesis\ell; u\rrparenthesis$ with $\ell\in\{-\infty\}\cup\mathbb{R}$, $u\in\mathbb{R}\cup\{+\infty\}$ and $\ell\leq u$.
We consider only continuous functions.
Let $z\in\llparenthesis\ell; u\rrparenthesis$ such that $f(z)=0$.
The goal is to determine {\bf under which circumstances $z$ should be considered a zero-crossing.} We limit ourselves to negative-to-positive zero-crossings; the opposite case is symmetric.

We first note that the function $f$ might be identically zero on an interval of non-zero length around $z$. We would like to precisely define this segment, which we will write $\llparenthesis a; b\rrparenthesis.$ For this purpose, let us define $a$ and $b$ as:
\begin{itemize}
\item $a \triangleq \inf\{a\in\llparenthesis\ell; u\rrparenthesis\ |\ \forall x\in\llparenthesis a;z], f(x)=0\}$;
\item $b \triangleq \sup\{b\in\llparenthesis\ell; u\rrparenthesis\ |\ \forall x\in[z;b\rrparenthesis, f(x)=0\}$.
\end{itemize}
%

\subsection{Case Distinction}
We now have the following properties for $a$ and $b$:
\begin{itemize}
\item $a\in\{-\infty\}\cup\mathbb{R}$ and $b\in\mathbb{R}\cup\{+\infty\}$;
\item $\forall x\in \llparenthesis a; b\rrparenthesis, f(x) = 0$;
\item $\llparenthesis a; b\rrparenthesis \subseteq \llparenthesis l; u\rrparenthesis$, i.e.  $\ell \leq a \leq b \leq u$;
\item if $a>l\geq -\infty$, then $\forall \varepsilon>0, \exists x\in [a-\varepsilon;a), f(x)\neq 0$; 
\item if $b<u\leq +\infty$, then $\forall \varepsilon>0, \exists x\in (b;b+\varepsilon], f(x)\neq 0$.
\end{itemize}
If $a=b$, the function $f$ is only equal to zero at one point $z=a=b$, and non-zero right before and right after: we will say that the function is \emph{passing} through $z=a=b$. If $a<b$, the function $f$ is identically zero on $\llparenthesis a; b\rrparenthesis$, we will say that the function $f$ is $\emph{staying}$ on $\llparenthesis a; b\rrparenthesis$.



\newcommand\plotszone[2]{
\begin{tikzpicture}
\begin{axis}[xmin=-1.5,xmax=1.5, ymin=-1, ymax=1, axis x line=middle, axis y line=middle, 
                    xlabel={}, ylabel={}, ytick={0}, yticklabels={}, xtick={0}, xticklabels={},
                    every axis plot/.append style={smooth, very thick, no marks}]%
    \addplot[blue, restrict x to domain=-1.5:0, samples at={-1.5,-1.495,...,-0.01,-0.005,-0.0001}]{#1};
    \addplot[blue, restrict x to domain=0:1.5, samples at={0.0001,0.005,0.01,...,1.495,1.5}]{#2};
\end{axis} 
\end{tikzpicture}
}

\newcommand\plotsztwo[2]{
\begin{tikzpicture}
\begin{axis}[xmin=-1,xmax=2, ymin=-1, ymax=1, axis x line=middle, axis y line=middle, 
                    xlabel={}, ylabel={}, ytick={0}, yticklabels={}, xtick={0,1}, xticklabels={},
                    every axis plot/.append style={smooth, very thick, no marks}]
    \addplot[blue, restrict x to domain=-1:0, samples at={-1.0,-0.995,...,-0.01,-0.005,-0.0001}]{#1}; 
    \addplot[blue, restrict x to domain=0:1, samples at={0.0, 0.1,...,0.9,1.0}]{0};
    \addplot[blue, restrict x to domain=1:2, samples at={1.0001,1.005,1.01,...,1.995,2.0}]{#2};
\end{axis} 
\end{tikzpicture}
}

\newcommand\plotsz[4]{
\resizebox{1.95cm}{1.55cm}{ 
\begin{tabular}{c}
\plotszone{#1}{#2}
\\
\plotsztwo{#3}{#4}
\end{tabular}}
}

\newcommand\plotszNA[2]{
\resizebox{1.95cm}{1.55cm}{
\begin{tabular}{c}
\phantom{\plotsztwo{#1}{#2}}
\\
\plotsztwo{#1}{#2}
\end{tabular}}
}

\newcommand\plotszAlone[2]{
\resizebox{1.95cm}{1.55cm}{ 
\plotsztwo{#1}{#2}}
}

With this definition, we identify 85 cases arising from various combinations of behaviors to the left and right of a candidate zero-crossing, as well as passing and staying, summarized in the extended version of the paper and Figure~\ref{fig:zero-crossing-cases}.

\begin{figure}
    \centering
\resizebox{1\linewidth}{!}{

\begin{tabular}{c|c|c|c|c|c|c|c}
& 1. & 2. & 3. & 4. & 5. & 6. & 7.\\\hline
 A. &
 \plotsz{-x^2}{x^2}{-x^2}{(x-1)^2} &
 \plotsz{-x^2}{-x^2}{-x^2}{-(x-1)^2} &
 \plotsz{-x^2}{x*1.5*sin(deg(3/x))}{-x^2}{(x-1)*1.5*sin(deg(3/(x-1))} &
 \plotsz{-x^2}{x*1.5*(sin(deg(3/x)))^2}{-x^2}{(x-1)*1.5*(sin(deg(3/(x-1))))^2} &
 \plotsz{-x^2}{-x*1.5*(sin(deg(3/x)))^2}{-x^2}{-(x-1)*1.5*(sin(deg(3/(x-1))))^2} &
 \plotsz{-x^2}{5}{-x^2}{5} &
 \plotszNA{-x^2}{0}
\\\hline
 B. &
 \plotsz{x^2}{x^2}{x^2}{(x-1)^2} &
 \plotsz{x^2}{-x^2}{x^2}{-(x-1)^2} &
 \plotsz{x^2}{x*1.5*sin(deg(3/x))}{x^2}{(x-1)*1.5*sin(deg(3/(x-1))} &
 \plotsz{x^2}{x*1.5*(sin(deg(3/x)))^2}{x^2}{(x-1)*1.5*(sin(deg(3/(x-1))))^2} &
 \plotsz{x^2}{-x*1.5*(sin(deg(3/x)))^2}{x^2}{-(x-1)*1.5*(sin(deg(3/(x-1))))^2} &
 \plotsz{x^2}{5}{x^2}{5} &
 \plotszNA{x^2}{0}
\\\hline
 C. &
 \plotsz{-x*1.5*sin(deg(3/x))}{x^2}{-x*1.5*sin(deg(3/x))}{(x-1)^2} &
 \plotsz{-x*1.5*sin(deg(3/x))}{-x^2}{-x*1.5*sin(deg(3/x))}{-(x-1)^2} &
 \plotsz{-x*1.5*sin(deg(3/x))}{x*1.5*sin(deg(3/x))}{-x*1.5*sin(deg(3/x))}{(x-1)*1.5*sin(deg(3/(x-1)))} &
\plotsz{-x*1.5*sin(deg(3/x))}{x*1.5*(sin(deg(3/x)))^2}{-x*1.5*sin(deg(3/x))}{(x-1)*1.5*(sin(deg(3/(x-1))))^2} &
 \plotsz{-x*1.5*sin(deg(3/x))}{-x*1.5*(sin(deg(3/x)))^2}{-x*1.5*sin(deg(3/x))}{-(x-1)*1.5*(sin(deg(3/(x-1))))^2} &
 \plotsz{-x*1.5*sin(deg(3/x))}{5}{-x*1.5*sin(deg(3/x))}{5} &
 \plotszNA{-x*1.5*sin(deg(3/x))}{0}
\\\hline 
D. &
 \plotsz{x*1.5*(sin(deg(3/x)))^2}{x^2}{x*1.5*(sin(deg(3/x))^2}{(x-1)^2} &
 \plotsz{x*1.5*(sin(deg(3/x)))^2}{-x^2}{x*1.5*(sin(deg(3/x))^2}{-(x-1)^2} &
 \plotsz{x*1.5*(sin(deg(3/x)))^2}{x*1.5*sin(deg(3/x))}{x*1.5*(sin(deg(3/x))^2}{(x-1)*1.5*sin(deg(3/(x-1)))} &
\plotsz{x*1.5*(sin(deg(3/x)))^2}{x*1.5*(sin(deg(3/x)))^2}{x*1.5*(sin(deg(3/x))^2}{(x-1)*1.5*(sin(deg(3/(x-1))))^2} &
 \plotsz{x*1.5*(sin(deg(3/x)))^2}{-x*1.5*(sin(deg(3/x)))^2}{x*1.5*(sin(deg(3/x))^2}{-(x-1)*1.5*(sin(deg(3/(x-1))))^2} &
 \plotsz{x*1.5*(sin(deg(3/x)))^2}{5}{x*1.5*(sin(deg(3/x))^2}{5} &
 \plotszNA{x*1.5*(sin(deg(3/x)))^2}{0}
\\\hline 
E. &
 \plotsz{-x*1.5*(sin(deg(3/x)))^2}{x^2}{-x*1.5*(sin(deg(3/x)))^2}{(x-1)^2} &
 \plotsz{-x*1.5*(sin(deg(3/x)))^2}{-x^2}{-x*1.5*(sin(deg(3/x)))^2}{-(x-1)^2} &
 \plotsz{-x*1.5*(sin(deg(3/x)))^2}{x*1.5*sin(deg(3/x))}{-x*1.5*(sin(deg(3/x)))^2}{(x-1)*1.5*sin(deg(3/(x-1)))} &
\plotsz{-x*1.5*(sin(deg(3/x)))^2}{x*1.5*(sin(deg(3/x)))^2}{-x*1.5*(sin(deg(3/x)))^2}{(x-1)*1.5*(sin(deg(3/(x-1))))^2} &
 \plotsz{-x*1.5*(sin(deg(3/x)))^2}{-x*1.5*(sin(deg(3/x)))^2}{-x*1.5*(sin(deg(3/x)))^2}{-(x-1)*1.5*(sin(deg(3/(x-1))))^2} &
 \plotsz{-x*1.5*(sin(deg(3/x)))^2}{5}{-x*1.5*(sin(deg(3/x)))^2}{5} &
 \plotszNA{-x*1.5*(sin(deg(3/x)))^2}{0}
\\\hline
F. &
 \plotsz{5}{x^2}{5}{(x-1)^2} &
 \plotsz{5}{-x^2}{5}{-(x-1)^2} &
 \plotsz{5}{x*1.5*sin(deg(3/x))}{5}{(x-1)*1.5*sin(deg(3/(x-1))} &
 \plotsz{5}{x*1.5*(sin(deg(3/x)))^2}{5}{(x-1)*1.5*(sin(deg(3/(x-1))))^2} &
 \plotsz{5}{-x*1.5*(sin(deg(3/x)))^2}{5}{-(x-1)*1.5*(sin(deg(3/(x-1))))^2} &
\resizebox{1.95cm}{1.75cm}{
\begin{tabular}{c}
\begin{tikzpicture}
\begin{axis}[xmin=-1.5,xmax=1.5, ymin=-1, ymax=1, axis x line=middle, axis y line=middle, 
                    xlabel={}, ylabel={}, ytick={0}, yticklabels={}, xtick={0}, xticklabels={},
                    every axis plot/.append style={smooth, very thick, no marks}]
    \addplot[blue, restrict x to domain=-0.015:0.015, samples at={-0.015,0,0.015}]{0};
\end{axis} 
\end{tikzpicture}
\\
\plotsztwo{5}{5}
\end{tabular}}
 &
 \plotszNA{5}{0}
\\\hline G. &
 \plotszAlone{0}{(x-1)^2} &
 \plotszAlone{0}{-(x-1)^2} &
 \plotszAlone{0}{(x-1)*1.5*sin(deg(3/(x-1))} & 
 \plotszAlone{0}{(x-1)*1.5*(sin(deg(3/(x-1))))^2} & 
 \plotszAlone{0}{-(x-1)*1.5*(sin(deg(3/(x-1))))^2} &
 \plotszAlone{0}{5} & 
 \plotszAlone{0}{0}
\end{tabular}


}
    \caption{Visualization of the 85 different cases studied in this paper: Rows A–G represent the seven possible behaviors of 
\emph{f} on arbitrarily small left neighborhoods of \emph{a}, and columns 1–7 represent the symmetric seven behaviors on arbitrarily small right neighborhoods of \emph{b}. Each panel shows the function shape corresponding to the intersection of the row and column cases. For each case (e.g. E4), there are two subcases, for the passing case ($a=b$) and the staying case ($a<b$). All cases are explained in detail in the extended version of the paper. For cases in line G and column 7, only the staying case is applicable.}
    \label{fig:zero-crossing-cases}
\end{figure}





\subsection{Desirable properties}
\begin{enumerate}
    \item If there exist $x,y\in\llparenthesis \ell;u \rrparenthesis$ such that $x<y$, $f(x)<0$ and $f(y)>0$, then there exists a zero-crossing $z\in(x;y)$.
    \item The function $f$ should
be strictly negative somewhere left of, and strictly positive somewhere right of a zero-crossing $z$, formally:\\
$\exists x,y\in\llparenthesis \ell;u \rrparenthesis, x<z<u$, $f(x)<0$ and $f(y)>0$.
    \item A passing case should be a zero-crossing if and only if its corresponding staying case is also a zero-crossing.
\end{enumerate}




Another decision is which point should be the zero-crossing in the staying case where $a<b$. We would typically want the zero-crossing to be $a$ or $b$, but which one is a matter of convention. For this paper, we will pick $b$ (but the subsequent developments can easily be adapted if picking $a$).




In light of these considerations, we define zero-crossings as follows:
\begin{definition}
$z\in\mathbb{R}$ is a zero-crossing for a continuous function $f$ if and only if there exist $a\in\mathbb{R}$ and $b\in\mathbb{R}$ with $a\leq b$ and $z=b$ such that:
\begin{enumerate}
\item $\forall x\in [a, b], f(x)=0$;
\item
$\forall\epsilon>0, \exists x\in[a-\epsilon; a), f(x)<0$;
\item
$\forall\epsilon>0, \exists x\in(b; b+\epsilon], f(x)>0$.
\end{enumerate}
\label{def:second-def-for-zc}
\end{definition}

\section{Syntax}\label{sec:syntax}
We model the hybrid syntax (Fig. \ref{fig:progsyntax}) as a subset of the existing Zélus syntax \cite{benveniste_hybrid_2011}. We extend this core syntax with reset clauses, which are essential for our hybrid examples; although resets are supported by the Zélus implementation, they are omitted from the minimal formal syntax in \cite{benveniste_hybrid_2011} and later introduced as equations in \cite{benveniste2012non}. We syntactically distinguish among expressions that are purely continuous-time, $ce$, purely discrete-time, $de$, and hybrid, $e$. Continuous expressions consist of constants, variables, arithmetic operations, and tuples, and form the fragment that can appear in differential equations and zero-crossing guards. The $\mathsf{last}$ operator represents the left limit of the continuous variable $x$. Discrete expressions are inherited from prior work on a purely discrete subset of Zélus \cite{chen_synchronous_2024}; they include standard synchronous constructs such as local bindings, recursion, function application, and $\mathsf{fby}$, the operator that initializes a stream with one value and specifies its subsequent values with another expression.

The central hybrid construct is \emph{let rec der $\dots$ init $\dots $ reset $\dots$ in $ce$}, which describes a continuous segment governed by an ODE together with event triggered resets that restart the segment when an upward zero-crossing is detected. Tuples provide a uniform way to represent multi-dimensional continuous state, which is needed because many of our examples evolve over several continuous variables simultaneously. Finally, global definitions, $D$, are top-level assignments that form the initial environment of a program and may be used to define constant variables or discrete functions.

\begin{figure}[!t]
    \centering
    \begin{align*}
        ce ::=~& c ~|~ x ~|~\mathsf{last}\ x~|~ ce_1\,\oplus\, ce_2 ~|~ (ce_1,\dots,ce_m)
      & \textnormal{Continuous Expressions}\\
        de ::= &~c~|~x~|~\mathsf{let}_h~(x:\tau)=de_1~\mathsf{in}~de_2 & \textnormal{Discrete Expressions}\\
        &~|~\mathsf{let~rec}_h~(x:\tau)=de_1~\mathsf{in}~de_2 ~|~ f~x~|~de_1~\mathsf{fby}~de_2\\
        &~|~\mathsf{delay}(de)~|~\mathsf{if}~x~\mathsf{then}~de_1~\mathsf{else}~de_2~|~(de_1,...,de_m)\\
        e::=~&\mathsf{let~rec~der}~\mathbf{x}^m=ce_f \\ &\mathsf{init}~de_0~(\mathsf{reset}~\mathsf{up}(ce_i)\to de_{r,i})_{i\in[1,n]}~\mathsf{in}~ce &\textnormal{Hybrid Expressions}\\
        D ::= &~\mathsf{let}~x:\tau=ce ~|~\mathsf{let}~x:\tau=de~& \textnormal{Global Definitions}\\
        &~|~\mathsf{let}~f~(x:\tau_1):\tau_2=de~
    \end{align*}
    \caption{Program Syntax}
    \label{fig:progsyntax}
\end{figure}


\subsection{Specification Syntax}
We derive our specification syntax (Fig. \ref{fig:refsyntax}) from prior work on refinement types for synchronous programming \cite{chen_synchronous_2024}. A refinement type has the form $\{v:b \mid \varphi\}$, where $b$ is a base type and $\varphi$ constrains the values of $v$. In our hybrid setting, however, $\varphi$ is restricted to an invariant trace predicate of the form $\Box p$, where $p$ is a state predicate over continuous expressions. Intuitively, this means that refinements specify safety properties that must hold throughout the execution of a hybrid program, rather than only at a single logical instant.

A key difference is the omission of ``point-wise'' temporal predicates such as $\bigcirc$ and $\mathsf{hd}()$, which apply instantaneously to a specific point in time and are less useful for continuous programs. Consequently, all type refinements become invariant specifications. The syntax of discrete-time types is unchanged apart from requiring invariant types. Although continuous programs are represented in Zélus using floating-point numbers, we make the simplifying assumption that their representation is precise enough to be treated as reals and defer to the existing literature for formally treating floating-point errors. \jiawei{Cite floating point error papers}

\jiawei{The hybrid type formalism only deals with invariant properties; ``point-wise" state predicates don't really make sense}
\begin{figure}[!t]
    \centering
    \begin{align*}
        \tau ::=&~b~\mid~\{v:b\mid\varphi\}~&\textnormal{(Types)}\\
        b ::=&~\kw{float}\mid~b_1\times~b_2~\times~...~\times~b_n~&\textnormal{(Base Types)}\\
        p,q ::=&~\kw{true}~\mid~\kw{false}~\mid~x~\mid~ce_1=ce_2~\mid~ce_1>ce_2& \textnormal{(State Predicates)}\\
        &~\mid~p\land~q~\mid~\neg p\\
        \varphi ::=&~\Box~p& \textnormal{(Trace Predicates)}\\
    \end{align*}
    \caption{Hybrid Type Syntax}
    \label{fig:refsyntax}
\end{figure}

\section{Motivating Examples}\label{sec:examples}
We begin by providing some motivating examples of systems exhibiting hybrid behavior, and demonstrate how they can be verified using our formalism  (Section~\ref{sec:proofs}). Additional examples can be found in the extended version of the paper.
\subsection{Water Tank}
A hybrid system often has multiple reset conditions and these resets do not always have direct influence over the variable of interest. The following program demonstrates a controller which is responsible for filling or draining a tank of water to maintain its level between two setpoints. In this case, the resets do not directly set the water level but instead influence its rate of change. For simplicity, we assume the flow rate can react instantaneously to controller inputs.
\begin{lstlisting}
let rec der (level, flow) =(flow init 5., 0. init 5.) reset
    | up(level -. 9.) -> (last level, -5.)
    | up(-.level +. 1.) -> (last level, 5.)
in (level, flow)
\end{lstlisting}
Ensure: $1 \leq level \leq 9$ at all times



\subsection{Event-Triggered Automatic Braking}
This system features a vehicle equipped with an automatic braking controller modeled after  \cite{butler_adaptive_2011}. The vehicle begins behind a stationary obstacle with an initial velocity and acceleration. The vehicle's starting position is assumed to be far enough behind the obstacle that a collision-free trajectory is possible given the vehicle's maximum braking deceleration. The controller is triggered to react when the position at which the vehicle will stop given maximum braking force begins to surpass the position of the obstacle (plus a safety margin), at which point the controller applies full braking deceleration. The vehicle stops decelerating when its velocity reaches zero. The property to verify is that the vehicle's position never exceeds that of the obstacle, hence remaining collision-free.
\begin{lstlisting}
let x_obs = 5.; brake = -0.2
let rec der (x,v,a)  = (v init 0., a init 1., 0. init 1.) reset 
  | up(x -. (v *. v /. (2. *. brake)) +. 0.1 -. x_obs) -> 
      (last x, last v, brake)
  | up(-. v ) -> (last x, last v, 0.)
in (x, v, a)
\end{lstlisting}
Ensure: $x<x_{obs}$

\section{Hybrid Stream Semantics}\label{sec:semantics}
We extend the semantics of the existing discrete-time synchronous language MARVeLus, \cite{chen_synchronous_2024} which expresses a single step of a stream program's execution with the judgment $S;\sigma \vdash e\stackrel{v}{\hookrightarrow}e'$. A program, represented by $e$, is executed in an environment $S;\sigma$ of functions and terms respectively. The use of separate environments is consistent with previous formalisms \cite{chen_synchronous_2024,benveniste_hybrid_2011}. This results in a value $v$, which is the output of the expression for a given instant in time, as well as a rewritten expression $e'$ to be executed in the next instant. Intuitively, the value $v$ represents the concrete value represented by the program $e$, while $e'$ represents the program's future behavior.

The semantics of MARVeLus are equipped to model a strictly discrete-time subset of Zélus. Consequently, the goal of our formalism is to not only extend the MARVeLus semantics with continuous dynamics, but to also ensure our extension is compatible with the original discrete-time semantics so that the combined semantics can express hybrid behaviors. 

Mixing of continuous and discrete time components is particularly challenging. Fortunately, Zélus addresses this with its handling of physical and logical time, allowing the two to synchronize exclusively at the occurrence of upward zero-crossings \cite{benveniste_hybrid_2011}. When a Zélus program executes, either the continuous or the discrete components are executed, but never both simultaneously. The discrete program executes an entire synchronous iteration in zero physical time, at which point the system integrates the continuous-time dynamics until a zero-crossing detection is triggered. If or when this occurs, the discrete program resumes with logical time advanced by one instant and variable values possibly updated by the continuous dynamics. This cycle repeats indefinitely, resulting in program execution that alternates between continuous- and discrete-time phases.
We note that continuous (hybrid) programs may contain discrete subterms, but not the other way around. This is consistent with the design of Zélus. Therefore, a hybrid program has all its continuous components at the top-level and never inside any discrete components. 


We begin by defining the semantics of continuous expressions and predicates, following a similar convention to dL \cite{platzer2018lfcps}. Given a finite set of variables $Var$, a continuous expression $ce$ of base type $b$ is evaluated in a state $\mathscr{S}:Var\rightharpoonup \mathbb{R}$ as $\llbracket ce \rrbracket:\mathscr{S}\to b$, defined as:
\begin{definition}
For a state $\omega:\mathscr{S}$:
\begin{align*}
    \llbracket x \rrbracket \omega&= \omega(x) & \text{if $x$ is a variable}\\
    \llbracket c \rrbracket\omega &= c & \text{if $c$ is a constant}\\
    \llbracket ce_1~\spadesuit ~ce_2\rrbracket\omega &= \llbracket ce_1\rrbracket\omega ~\spadesuit~ \llbracket ce_2 \rrbracket \omega  & \text{where $\spadesuit$ is a binary arithmetic operator}\\
    \llbracket p~\clubsuit~q\rrbracket \omega &= \llbracket p \rrbracket\omega ~\clubsuit~ \llbracket q\rrbracket\omega & \text{where $\clubsuit$ is a logical connective}\\
    \llbracket \neg p \rrbracket\omega &= \neg \llbracket p \rrbracket\omega\\
    \llbracket (ce_1,...,ce_n)\rrbracket \omega &= (\llbracket ce_1\rrbracket\omega,...\llbracket ce_n\rrbracket\omega)
\end{align*}
\end{definition}

In the case of vector expressions, such as $\mathbf{x}$ defined as $(x_1,...x_n)$, we implicitly apply necessary projection and concatenation operations but otherwise treat them as a single expression. Note that our convention is reversed from that of \cite{platzer2018lfcps} to the expression-then-environment notation, to be more consistent with other operational semantics.

Syntactically determining the solution for a dynamical system is beyond the scope of the present work; we assume the existence of solutions of the dynamical systems. Consequently, we define an abstract solver ``flow'' and its accompanying semantic rule written in terms of such a solution to obtain the continuous evolution of the system until the next zero-crossing:
\[
\inferrule
{
  \forall i.\ \forall t' \in [0,t_r].\ \frac{d\,\rho(t)(x_i)}{dt}(t') \;=\;\llbracket ce_i\rrbracket_{i\in[1,n]}\rho(t')
  \\
  \rho(0)(\mathbf{x}) = \mathbf{v_0}
  \\
  \forall y, \forall i.\ \forall t' \in [0,t_r]. y\neq x_i \implies \left(\rho(y)=\sigma(y) \land \frac{d\,\rho(t)(y)}{dt}(t')=0\right)
  \\
  t_r = +\infty \lor t_r \text{ is the first zero-crossing of } \{\llbracket u_j\rrbracket\rho(t)~|~j\in[1,m]\} 
}
{
  \sigma\vdash\mathsf{flow}\big(\mathsf{der}\ \mathbf{x} = ce,\ \mathbf{v_0},\ \{u_j\}_{j\in[1,m]}\big)
  \Downarrow
  \big(t \mapsto \rho(t),\ t_r\big)
}
\]
  
Here, a dynamical system composed of one or more differential equations, their initial states, zero-crossing detection functions, and the surrounding global environment has a solution given by the time-varying state $\rho(t)$ defined from $t=0$ to $t=t_r\in\mathbb{R}_+^*\cup\{+\infty\}$ (strictly positive or infinity), equivalently $t_r\in(0,+\infty]$ at which point $t_r$ is the earliest time satisfying the chosen zero-crossing predicate for $u$. Consequently, $\rho(t_r)$ represents the system state at the moment a zero crossing occurs (if it reaches the zero-crossing). Continuous global definitions are also imported into $\rho$ from $\sigma$; we follow the convention of dL in requiring that they are constant \cite{platzer2018lfcps}.

\begin{definition}\label{def:firstzc}
For time $t\in\mathbb{R}_+^*\cup\{+\infty\}$, $t$ is the \emph{first zero-crossing} of a set of $n$ functions $\{f_i:\mathbb{R}_+^*\cup\{+\infty\}\to\mathbb{R}~|~i\in [1,n]\}$ if:
\begin{itemize}
    \item $\exists i\in[1,n]~.~t_r\text{ is a zero-crossing of } f_i(t_r)$, and
    \item $\forall t'<t_r~.~\nexists i\in[1,n]~.~t'\text{ is a zero-crossing of } f_i(t_r)$
\end{itemize}
\end{definition}

Here, in $\rho: [0,t_r+\epsilon]\to \mathscr{S}$, for a sufficient $\epsilon$ to verify that $t_r$ is a zero-crossing according to Definition \ref{def:second-def-for-zc} and $\mathscr{S}:Var \rightharpoonup \mathbb{R}$ which associates a finite set of variables to real values.

We define two semantic rules (S-LETREC-DER-FIN) and (S-LETREC-DER-INF) for the two possibilities of hybrid program execution---finite and infinite continuous segments respectively. The judgment returns a ``trace" of the program's continuous execution and the length for which said segment had evolved. In the finite case, the judgment also provides the output of the program after the reset. (S-LETREC-DER-FIN) first evaluates the expression $de_0$ defining the continuous segment's initial condition using discrete computation. Then the semantic judgment on $\mathsf{flow}$ is invoked to obtain a trace of the program's dynamical behavior until a zero-crossing is encountered. It is necessary to disambiguate zero-crossings in the case where two guards $u_i$ and $u_j$ simultaneously approach zero. The semantics of Zélus deterministically chooses the first zero-crossing by syntactic order. Finally, when a zero-crossing is encountered, the $j$-th expression representing its reset is evaluated with the values of the continuous variables immediately prior to reset populated in its context. (S-LETREC-DER-INF) has a similar structure for its initial and continuous phases.

\begin{figure}[!t]
    \centering
        

        \begin{mathpar}
        \inferrule*[lab=\textsc{(S-Letrec-Der-Fin)}]{\\
          S;\sigma \vdash de_0\stackrel{v_0}{\hookrightarrow} (de_0)'\\
          \mathsf{flow}(\mathsf{der}~x^m=ce_f,v_0, (ce_{u,i})_{i\in[1,n]}) \Downarrow (\rho, t_r)\\\;
          t_r<+\infty\\\;
          j=\mathsf{min}(\{j~|~j\in [1,n] \land t_r \textnormal{ is a zero-crossing of } \llbracket ce_{u,j} \rrbracket\rho(t)\})\\\;
          S;\sigma,[x^m\mapsto \rho(t_r)(x^m)::\mathsf{nil}] \vdash de_{r,j}\stackrel{v_{r,j}}{\hookrightarrow}de_{r,j}'
        }{
          S;\sigma \vdash
          \mathsf{let}\ \mathsf{rec~der}\ \mathbf{x}^m:\tau=ce_f~\mathsf{init}~de_0~\mathsf{reset}~(\mathsf{up}(ce_{u,i})\to de_{r,i})_{i\in [1,n]}~\mathsf{in}\ ce
          \\\;\stackrel{(t\mapsto (\llbracket ce \rrbracket\rho(t)), t_r);\llbracket ce\rrbracket [x^m\mapsto v_{r,j}]}{\hookrightarrow}\\\; 
          \mathsf{let}\ \mathsf{rec~der}\ \mathbf{x}^m:\tau=ce_f~\mathsf{init}~v_{r,j}~\mathsf{reset}~(\mathsf{up}(ce_{u,i})\to de_{r,i})_{i\in [1,n]}[de_{r,j}\mapsto de'_{r,j}]~\mathsf{in}\ ce
        }
        \end{mathpar}
        

        \begin{mathpar}
        \inferrule*[lab=\textsc{(S-Letrec-Der-Inf)}]{\\
          S;\sigma \vdash de_0\stackrel{v_0}{\hookrightarrow} (de_0)'\\
          \mathsf{flow}(\mathsf{der}~x^m=ce_f,v_0, (ce_{u,i})_{i\in[1,n]}) \Downarrow (\rho, +\infty)\\\;
        }{
          S;\sigma \vdash
          \mathsf{let}\ \mathsf{rec~der}\ \mathbf{x}^m:\tau=ce_f~\mathsf{init}~de_0~\mathsf{reset}~(\mathsf{up}(ce_{u,i})\to de_{r,i})_{i\in [1,n]}~\mathsf{in}\ ce
          \\\;\stackrel{(t\mapsto (\llbracket ce \rrbracket\rho(t)), +\infty);\mathsf{nil}}{\hookrightarrow}\\\; 
          [\infty]
        }
        \end{mathpar}
                
    \caption{Hybrid Stream Semantics of the form $S;\sigma \vdash e\stackrel{(v(t),t_r); [v_r|\mathsf{nil}]}{\hookrightarrow}e'$}
    \label{fig:semantics}
\end{figure}
 $[\infty]$ denotes the result of a continuous phase which never resets. This is distinct from the classic notion of ``getting stuck" in operational semantics; the program still evolves in continuous time but requires no intervention from discrete resets to remain safe. Thus no discrete computation is triggered.






Note that a rewrite of a hybrid program can produce one of two outputs, respectively for the finite and infinite cases: 
\begin{itemize}
    \item $(t\mapsto \llbracket ce\rrbracket\rho(t), t_r);\llbracket ce \rrbracket [x^m\mapsto v_{r,j}]:\left(([0,t_r]\to b)\times \mathbb{R}_+^*\cup\{\infty\}\right);b$ for programs where the current continuous segment concludes with a zero-crossing detection; $[x^m\mapsto v_{r,j}]$ represents the program's state after the discrete computation following the zero-crossing detection and $b$ is the base type of $ce$, the expression to be evaluated in the context of the hybrid definition.
    \item $(t\mapsto \llbracket ce\rrbracket\rho(t), t_r);\mathsf{nil}:([0,t_r]\to b)\times \mathbb{R}_+^*\cup\{\infty\}$ for programs with an infinite continuous segment and thus do not produce a reset value
\end{itemize}

\subsection{Predicate Semantics}
Verifying that a hybrid program satisfies a given predicate requires formally establishing the requirements for each phase of execution to guarantee its satisfaction. In the discrete-time case \cite{chen_synchronous_2024}, this requires showing that the program's value computed (hence its ``output'') at each instant satisfies the requirements for the predicate at that instant. The hybrid case introduces some additional complexity. Although the program still only computes outputs on discrete-time events, one must also guarantee that the program's dynamics do not violate the specification in the continuous-time phase between these events. Naturally, this requires knowing that a program's execution satisfies the specification for the entire duration of all continuous-time phases, and that furthermore, any discrete computations instantaneously satisfy the specification as well. Since we are primarily interested in verifying safety conditions, we focus on invariant properties of the form $\square p$.

\begin{definition} 
    A hybrid program $e$ satisfies an invariant property $\square q$ if:
    \begin{itemize}
        \item For any continuous execution of $e$ over time $t\in [0,t_r], t_r\in \mathbb{R}_+^*\cup \{+\infty\}$ producing a trace $\rho(t)$, $\llbracket q\rrbracket\rho(t)$ is true for all $t\in[0,t_r]$
        \item For any discrete computation $e_r\stackrel{v_r}{\hookrightarrow}e_r'$ within $e$ which resets its state producing the post-reset state $[x\mapsto v_r]$, $\llbracket q \rrbracket [x\mapsto v_r]$ is true.
    \end{itemize}
\end{definition}
Note that, in our semantics, $t$ represents the time elapsed since the beginning of the \emph{current} continuous segment, not global time. Consequently, we require an invariant property to hold for any continuous or discrete execution that occurs at \emph{any} time.

A pathological case that arises in hybrid systems is Zeno behavior, in which the number of discrete-time events within a given span of time tends towards infinity, as demonstrated by a bouncing ball with decaying velocity. As a result, there is a finite time beyond which the system cannot progress. Hybrid modelers of all kinds must contend with such behaviors and there are approaches to mitigating issues that arise in simulating them \cite{konecny_enclosing_2016}. We make no claim of addressing liveness in these cases. Instead we note that since our predicate semantics is oriented around the execution trace of the program rather than physical time, invariant properties do not make potentially unsafe claims about the system's safety beyond reachable times if Zeno behavior exists. In all other cases, the time of the execution would indeed extend to infinity.

\section{The Hybrid Type System}\label{sec:types}
The type system of the hybrid extension is constructed with adequate conditions so that a well-typed program is guaranteed to provably satisfy a given safety property, encoded as a refinement type. We take inspiration from the sound inference rules of dL and consequently structure our typing rules similarly. 

A crucial difference that arises when applying these techniques to verifying Zélus programs is that there is no exact equivalent to dL's evolution domain, which immediately interrupts continuous evolution if the system begins to exit said domain, in Zélus. While upward zero-crossing detection can approximate an evolution domain, the safe application of this technique is nontrivial. An evolution domain simply examines the instantaneous value of the system to determine whether to interrupt the continuous segment, so it is easy to guard against, for instance, a value exceeding a given number. However, a zero-crossing detector, per Definition \ref{def:second-def-for-zc}, can only do the same if its guard is first negative for a nonzero period of time. As a result, the properties that a zero-crossing detector can ``guard'' against can only be assumed if the zero-crossing detector can be verified to be ``active" at that moment. We introduce the following operator:
\begin{definition}\label{def:active}
    \begin{align*}
    &Active(v, \{u_1,...,u_n\})\triangleq \\ &\qquad \qquad \qquad \qquad \bigwedge\{\square (u_i\leq 0)~|~ \llbracket\left(u_i<0~\lor~(u_i=0\Rightarrow \dot{u_i}<0)\right) \rrbracket[x\mapsto v]\}.
\end{align*}
\end{definition}
which, for a given value $v$ representing the current state of the system, provides the predicates based on the guards of the zero-crossing detectors that can be assumed to be true until at least the next zero-crossing (or for all time if none occur). This is helpful because not all properties can be proven with a differential invariant alone. Thus, this function helps determine which parts of a property can be assumed true in the presence of zero-crossing detectors that will interrupt the program before they are violated. To prove an invariant property $I$ is true for a continuous segment with an initial value $v_0$, it is necessary to find a $\phi$ that is provable via differential invariants such that $\left(\phi \land Active(v_0, \{u_1,...u_n\})\right) \implies I$.

A zero-crossing guard $u_i$ is ``active" if at any point during the execution of the continuous segment $t\in(0,t_r]$, $\llbracket u_i \rrbracket \rho(t)=0$ implies that $t$ is a zero-crossing of $u_i$. Definition \ref{def:active} considers only zero-crossings that can be determined to be active knowing only their instantaneous position and derivative. This is desirable for the proof because it requires no knowledge of the dynamical system's explicit solution. Requiring that $u_i$ begins negative is obvious; it will require a nonzero amount of time for $u_i$ to evolve towards zero thus satisfying Definition \ref{def:second-def-for-zc}. The second condition, also allowing $u_i$ to be considered active if it begins at zero with a negative derivative, arises because $u_i$ will immediately become negative during the continuous segment, thus also requiring nonzero time to continuously evolve towards zero. Here, $\dot{u_i}$ represents the symbolic time derivative of the expression, as defined in \cite{platzer2018lfcps}. This allows, for instance, verifying that a bouncing ball does not fall through the ground immediately after a bounce, since it would be moving away from the ground.




\begin{figure}[!t]
\centering
\begin{mathpar} \inferrule*[lab=\textsc{(T-DER-FIN)}] { (T1)~\exists \phi_I.\Gamma,x:\{v:b|\phi(v)\},\dot{x}:\{v:b | v=e_f\} \vdash  e_0 : \{v:b|\phi(v)\land \phi_I'(\dot{x})\land 
\\ \qquad \qquad \qquad \qquad \qquad \qquad  ((\phi_I(x)\land Active(v,\{u_1...u_n\}) \implies \phi(x))\} \\ (T2)~\forall i\in [1,n],\exists \phi_{i,I}~.~\Gamma, x:\{v:b|\phi(v)\land u_i(v)=0\},\dot{x}:\{v:b | v=e_f\}\vdash \\ \qquad \qquad \qquad \qquad \qquad \qquad e_{r,i} : \{v:b| \phi(v) \land \phi'_{i,I}(\dot{x})\land 
\\ 
\qquad \qquad \qquad \qquad \qquad \qquad ((\phi_{i,I}
(x) \land Active(v,\{u_1...u_n\}) \implies \phi(x))\} \\ (T3)~\Gamma, x:\{v:b|\phi(v)\}\vdash ce : \tau } { \Gamma \vdash \textsf{let rec der}\ x : \{v:b|\phi(v)\} = e_f\ \textsf{init}\ e_0\ \textsf{reset}\ \substack{ \mid\ \textsf{up}(u_1) \to e_{r,1}\\ \vdots\\ \mid\ \textsf{up}(u_n) \to e_{r,n}} \ \textsf{in}\ ce : \tau } \end{mathpar}
\begin{mathpar}
\inferrule*[left=\textsc{(T-DER-INF)}]
{ 
~
}
{ \Gamma \vdash [\infty]:\tau}
\end{mathpar}

\caption{Stream Typing Rules for Hybrid Expressions}
    \label{fig:typing-rules}
\end{figure}

Safety of a program's execution is proven by induction on segments consisting of continuous integration possibly interrupted by a zero-crossing and a discrete computation which initializes the next segment. This is captured in the typing rule (T-DER-FIN) (Fig. \ref{fig:typing-rules}). For each segment, the initial value must first be verified safe. Then, using differential reasoning based on the zero-crossings that are active at the beginning of the continuous period, the continuous part is proven to also obey the invariant. This involves determining which parts of the invariant are proven with knowledge that the active zero-crossings will interrupt execution before they are violated, and then verifying that the remaining portion is provable via a differential invariant. Although these are expressed as existentials in the typing rule, they can be determined based on user-provided annotations or simple heuristic methods in implementation. Concretely, one portion of the safety predicate may be designated by the user as the part to be established by differential reasoning. In this way, the existential form of the rule reflects proof flexibility for the metatheory, while still admitting a practical implementation. 

Finally, all possible resets of the system are proven safe with respect to their triggering conditions to ensure that they are safe initial conditions for subsequent continuous integration. The infinite case types the residual term resulting from rewriting a hybrid expressions with an infinite continuous segment. Since the previous expression would have been well-typed through an application of (T-DER-FIN) which guarantees safety until the next reset, an absence of resets would vacuously imply safety for an infinite amount of time. Thus, no additional verification is necessary.

\subsection{Type Safety}
We prove type safety roughly following the classical syntax-guided manner of Wright and Felleisen \cite{wright1994syntactic}.
\paragraph{Environment correspondence.}
We write $\Gamma \;\approx\; (S;\sigma)$ for the standard correspondence between a typing environment
$\Gamma$ and a runtime configuration $(S;\sigma)$ (control state $S$ with store/history $\sigma$).
Formally,
\[
\Gamma \approx (S;\sigma)
\quad\triangleq\quad
\forall x \in \kw{dom}(\Gamma).\; x \in \kw{dom}(\sigma)\ \land\ (S;\sigma)\vdash \sigma(x) : \Gamma(x).
\]
That is, every variable declared in $\Gamma$ is bound in $\sigma$, and its runtime value is well-typed
with respect to $\Gamma$.

\paragraph{Discrete preservation}
Our hybrid extension shares the discrete sublanguage and its small-step
semantics with MARVeLus. Therefore, we reuse the discrete type preservation result~\cite{chen_synchronous_2024}, specifically Lemma~4 (Type Preservation), for all purely
discrete evaluation and stepping that occurs inside reset branches and other non-continuous terms. Furthermore, correspondence is shown to be preserved by terms introduced into the discrete environment by continuous components.

\begin{lemma}[First up-crossing implies preservation of the active domain]
\label{lem:first-upcross-active-domain}
Assume
\[
flow(\mathsf{der}\ x=e_f,\ x_0,\ \{u_1,\dots,u_n\}) \Downarrow (\rho,t_r)
\]
with $t_r<+\infty$, and suppose branch $i$ triggers at $t_r$ (i.e., $t_r$ is the first upward
zero-crossing among the monitored signals).
Let $\kw{Active}$ be the predicate from Definition \ref{def:active}. Then:
\[
\forall t\in[0,t_r].\ \kw{Active}(\rho(0)(x),\{u_1,\dots,u_n\})
\qquad\text{and}\qquad
u_i(\rho(t_r)(x))=0
\]
where $i$ is the index of the first zero-crossing per Definition \ref{def:firstzc}, i.e. the evolution-domain facts contributed by $\kw{Active}$ are preserved up to $t_r$.
\end{lemma}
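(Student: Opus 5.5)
The plan is to read the statement as two claims. First, for every guard $u_k$ admitted into $\kw{Active}(\rho(0)(x),\{u_1,\dots,u_n\})$, the trajectory satisfies $u_k(\rho(t)(x))\le 0$ for all $t\in[0,t_r]$. Second, the triggering guard vanishes at $t_r$. I will prove the second claim first, because it follows directly from the definitions. By the flow rule, $t_r$ is the first zero-crossing of the family $\{\llbracket u_j\rrbracket\rho(t)\}$. By Definition~\ref{def:firstzc}, some index $i$ (the minimal one, as chosen in (S-Letrec-Der-Fin)) has $t_r$ as a zero-crossing. Definition~\ref{def:second-def-for-zc} then gives $a\le b=t_r$ with $f=0$ on $[a,b]$. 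Hence $u_i(\rho(t_r)(x))=0$.

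For the first claim, fix a guard $u_k$ that contributes $\square(u_k\le 0)$. By Definition~\ref{def:active}, either $u_k(\rho(0))<0$, or $u_k(\rho(0))=0$ with $\dot u_k(\rho(0))<0$. Write $g(t)=\llbracket u_k\rrbracket\rho(t)$. Since $\rho$ is differentiable with derivative given by the ODE, $g$ is continuous and differentiable, and $g'(0)$ equals the symbolic derivative $\dot u_k$ evaluated at $\rho(0)$ (the chain rule correspondence of \cite{platzer2018lfcps}). In both cases I obtain some $\delta>0$ such that $g(t)<0$ for all $t\in(0,\delta]$. In the first case this follows from continuity. In the second it follows from $g(0)=0$, $g'(0)<0$ and the definition of the derivative.

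Now I argue by contradiction. Suppose $g(s)>0$ for some $s\in[0,t_r]$; necessarily $s>\delta$. Then $g(\delta)<0<g(s)$, so by desirable property~1, which I would verify directly for Definition~\ref{def:second-def-for-zc}, $g$ has a zero-crossing $z\in(\delta,s)$. To verify it, set $b=\inf\{t\in[\delta,s]\mid g(t)>0\}$ and $a=\inf\{t\le b\mid g=0\text{ on }[t,b]\}$. Continuity gives $g(b)=0$ and $g=0$ on $[a,b]$. Points with $g>0$ lie arbitrarily close to the right of $b$. Points left of $a$ are nonzero arbitrarily close to $a$, and they are $\le 0$ because they precede $b$, so they are negative. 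Thus $z=b<s\le t_r$ is a zero-crossing of $u_k$ strictly before $t_r$. This contradicts the second clause of Definition~\ref{def:firstzc}, so $g\le 0$ on $[0,t_r]$.

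The endpoint needs some care. If $s=t_r$ with $g(t_r)>0$, then $z<t_r$, so the argument still applies. The flow rule only requires the solution on $[0,t_r]$ (extended to $[0,t_r+\epsilon]$), so no issue arises beyond $t_r$. Taking the conjunction over all admitted $k$ gives the stated preservation of $\kw{Active}$.

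I expect the main obstacle to be the zero-crossing construction, since Definition~\ref{def:second-def-for-zc} requires genuinely negative points accumulating to the left of $a$. The choice of $b$ as the first point after which $g$ turns positive, combined with starting the search at $\delta$ where $g<0$, is what guarantees this. I would also state explicitly that the semantic derivative matches the symbolic $\dot u_k$, assuming the guards are differentiable expressions.
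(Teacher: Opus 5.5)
Your proposal is correct and follows essentially the same route as the paper. Both fix an active guard and use its sign/derivative condition to get a strictly negative time. Both then take the infimum of positive times together with the maximal zero interval ending there, which produces an upward zero-crossing strictly before $t_r$ and contradicts Definition~\ref{def:firstzc}. Both read off $u_i(\rho(t_r)(x))=0$ from condition (1) of Definition~\ref{def:second-def-for-zc}.

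One refinement is worth keeping. You note explicitly that the points just left of $a$ are negative because $g\le 0$ before the infimum, not merely nonzero by minimality of $a$. This closes a step the paper's proof leaves implicit.
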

Proof can be found in the extended version of the paper.

\begin{lemma}[Monitored-domain preservation for non-terminating flow]
\label{lem:domain-pres-infinite}
Assume
\[
flow(\mathsf{der}\ x=e_f,\ x_0,\ \{u_1,\dots,u_n\}) \Downarrow (\rho,+\infty).
\]
Let $\kw{Active}$ be the predicate from Definition \ref{def:active}. Then:
\[
\forall t\ge 0.\ \kw{Active}(\rho(0)(x),\{u_1,\dots,u_n\}).
\]
\end{lemma}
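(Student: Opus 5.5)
We read the conclusion $\forall t\ge 0.\ \kw{Active}(\rho(0)(x),\{u_1,\dots,u_n\})$ in the intended sense of Definition~\ref{def:active}. For every guard $u_i$ that is \emph{active at the initial state}, meaning $\llbracket u_i<0 \lor (u_i=0\Rightarrow \dot u_i<0)\rrbracket[x\mapsto\rho(0)(x)]$ holds, we must show $\llbracket u_i\rrbracket\rho(t)\le 0$ for all $t\ge 0$. The argument mirrors Lemma~\ref{lem:first-upcross-active-domain}, with $t_r=+\infty$. From the last premise of the $\mathsf{flow}$ rule, no finite $t$ is a zero-crossing of any $g_j(t)\triangleq\llbracket u_j\rrbracket\rho(t)$ in the sense of Definition~\ref{def:second-def-for-zc}. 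Each $g_j$ is continuous on $[0,+\infty)$: $\rho$ is differentiable, since it solves the ODE, and $u_j$ is a continuous expression built from arithmetic. We fix an active index $i$ and argue by contradiction, assuming some $t^\ast>0$ with $g_i(t^\ast)>0$.

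\textbf{Step 1 (initial negativity).} We first find $\delta>0$ such that $g_i<0$ on $(0,\delta]$. If $g_i(0)<0$, this follows by continuity. If $g_i(0)=0$ and $\dot u_i<0$ at $\rho(0)$, we use the chain rule: $g_i'(0)$ equals the symbolic derivative $\dot u_i$ evaluated at $\rho(0)$, with $\dot x:=e_f$, by the derivative lemma of dL~\cite{platzer2018lfcps}. This uses the hypothesis in the flow rule that the non-$x$ variables are constant. The one-sided derivative is then negative, so $g_i(t)<0$ for small $t>0$.

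\textbf{Step 2 (constructing a zero-crossing).} We fix $s\in(0,\delta]$, so $g_i(s)<0<g_i(t^\ast)$ with $s<t^\ast$. We define $b\triangleq\sup\{t\in[s,t^\ast] \mid g_i(t)\le 0\}$. By continuity $g_i(b)=0$, $b<t^\ast$, and $g_i>0$ on $(b,t^\ast]$, which gives condition 3 of Definition~\ref{def:second-def-for-zc}. Next we define $a\triangleq\inf\{t\in[s,b] \mid g_i=0 \text{ on } [t,b]\}$. Then $g_i=0$ on $[a,b]$ by continuity, and $a>s$ because $g_i(s)<0$. Every left neighbourhood $[a-\epsilon,a)$ contains a point where $g_i\ne 0$. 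We must show that such a point can be chosen with $g_i<0$. This is the delicate part, corresponding to the oscillating rows C and D of Figure~\ref{fig:zero-crossing-cases}, where $g_i$ might be non-negative just left of $a$.

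\textbf{Main obstacle.} To handle this, we choose the crossing more carefully. Let $c\triangleq\sup\{t\in[s,t^\ast] \mid g_i(t)<0\}$. Then $g_i\ge 0$ on $[c,t^\ast]$, $g_i(c)=0$, and points with $g_i<0$ accumulate at $c$ from the left. Now let $b'\triangleq\sup\{t\in[c,t^\ast] \mid g_i=0 \text{ on } [c,t]\}$. Then $g_i=0$ on $[c,b']$, and because $g_i\ge 0$ there, every right neighbourhood of $b'$ contains a point with $g_i>0$. Thus the pair $a=c$, $b=b'$ satisfies all three clauses of Definition~\ref{def:second-def-for-zc}, so $b'\in(0,t^\ast)$ is a zero-crossing of $g_i$. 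This contradicts the premise that the flow has no crossing, i.e.\ $t_r=+\infty$.

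Hence $g_i\le 0$ on $(0,+\infty)$, and it also holds at $t=0$ by the activity condition. Conjoining over all active $i$ yields $\Box(u_i\le 0)$ along the whole trace, which is the claim. We note that only the guard $u_i$ itself is used. The other guards, and the syntactic tie-breaking in \textsc{(S-Letrec-Der-Fin)}, play no role here, since there is no finite reset time.
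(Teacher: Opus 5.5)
Your proof is correct and follows the same strategy as the paper's: assume an initially active guard is positive at some finite $t^\ast$, use initial negativity and continuity to build an upward zero-crossing of that guard in $(0,t^\ast)$ in the sense of Definition~\ref{def:second-def-for-zc}, and contradict the $+\infty$ outcome of $\mathsf{flow}$. The only difference is a mirror-image choice of anchor, and both constructions are valid: the paper takes $\tau=\inf\{t\in[0,t_1]\mid g_j(t)>0\}$ and extends the zero plateau leftward, so left-negativity holds because $g_j\le 0$ on $[0,\tau)$ (the paper leaves this implicit in ``by minimality of $a$''), whereas you take $c=\sup\{t\in[s,t^\ast]\mid g_i(t)<0\}$ and extend rightward, getting right-positivity from $g_i\ge 0$ on $[c,t^\ast]$.
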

Proof can be found in the extended version of the paper.

\begin{lemma}[Continuous Type Preservation]
\label{lem:preservation}
    
    A hybrid program $e$ in an environment $\Gamma$ corresponding with $(S;\sigma)$, which is well-typed ($\Gamma\vdash e:\{v:b~|~\square \phi(v)\}$), 
    and steps ($S;\sigma \vdash e \stackrel{(v(t),t_r);[v_r|\mathsf{nil}]}{\hookrightarrow} e'$), has the following hold:
    \begin{enumerate}[leftmargin=*]
        \item $\square_{[0,t_r]}\phi(v(t))$
        \item $\Gamma \vdash v_{r}:\{v:b|~\square \phi(v)\}$ if $t_r$ is finite
        \item $\Gamma\vdash e':\{v:b~|~\square \phi(v)\}$
    \end{enumerate}
\end{lemma}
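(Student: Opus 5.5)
The plan is to case-split on the last rule used in the step derivation: \textsc{(S-Letrec-Der-Fin)} or \textsc{(S-Letrec-Der-Inf)}. In each case we invert the typing derivation of $e$. Since $e$ is a hybrid expression, the only typing rule that applies to a $\mathsf{let~rec~der}$ term is \textsc{(T-DER-FIN)}, so we obtain premises (T1)--(T3). The refinement $\square\phi$ on the result is connected to the declared invariant on $x$ through (T3) and ordinary subtyping. For readability we treat the case where the output expression $ce$ is $x$ itself, or where (T3) transfers $\square\phi$ pointwise from $x$ to $\llbracket ce\rrbracket$. Throughout, $\Gamma\approx(S;\sigma)$ is maintained, using the paper's remark that continuous components only bind values of the declared refinement types.

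\textbf{Item 1 (continuous segment).} First apply the discrete preservation result of~\cite{chen_synchronous_2024} to $S;\sigma\vdash de_0\stackrel{v_0}{\hookrightarrow}(de_0)'$ together with (T1). This gives $\phi(v_0)$, $\phi_I'$ evaluated at $\dot x=e_f$, and the implication $(\phi_I\wedge Active(v_0,\{u_j\}))\Rightarrow\phi$.
\begin{itemize}
\item To establish $\phi_I$ along the flow, we use soundness of the dL rule (dI). Its premises are that $\rho$ solves $\dot x=e_f$ on $[0,t_r]$ (from the $\mathsf{flow}$ judgment) and that the derivative condition $\phi_I'$ holds. From these, $\phi_I(\rho(t))$ follows for all $t\in[0,t_r]$. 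This requires $\phi_I(v_0)$; we extract it from the typing of $e_0$, reading (T1) as implicitly requiring the invariant to hold initially.
\item For the active guards, we invoke Lemma~\ref{lem:first-upcross-active-domain} when $t_r<\infty$ and Lemma~\ref{lem:domain-pres-infinite} when $t_r=\infty$. These give $Active(\rho(0)(x),\{u_j\})$ throughout $[0,t_r]$.
\item Combining the two facts with the implication yields $\phi(\rho(t))$ on the whole segment, and hence $\square_{[0,t_r]}\phi(v(t))$.
\end{itemize}

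\textbf{Item 2 (reset value, finite case only).} Lemma~\ref{lem:first-upcross-active-domain} gives $u_j(\rho(t_r))=0$ for the triggered branch $j$. By continuity of $\rho$, item 1 also holds at $t_r$. Therefore the store $\sigma,[x\mapsto\rho(t_r)]$ satisfies the refinement $\{v:b\mid\phi(v)\wedge u_j(v)=0\}$ assumed for $x$ in (T2), so the extended environment still corresponds. Discrete preservation applied to $de_{r,j}$ under (T2) then gives $\Gamma\vdash v_{r,j}:\{v:b\mid\phi(v)\wedge\dots\}$, and weakening the refinement gives $\Gamma\vdash v_r:\{v:b\mid\square\phi(v)\}$.

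\textbf{Item 3 (residual term).} In the infinite case $e'=[\infty]$, which is typed by \textsc{(T-DER-INF)} at any $\tau$. In the finite case $e'$ is the same $\mathsf{let~rec~der}$ term with $\mathsf{init}~v_{r,j}$ and with $de_{r,j}$ replaced by $de'_{r,j}$, so we rebuild a \textsc{(T-DER-FIN)} derivation.
\begin{itemize}
\item (T3) is unchanged.
\item (T2) for $i\neq j$ is unchanged. For $i=j$ it follows from discrete preservation of $de_{r,j}\hookrightarrow de'_{r,j}$, which keeps the same type.
\item (T1) for the new initializer $v_{r,j}$ is obtained by taking $\phi_I:=\phi_{j,I}$. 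The type derived for $v_{r,j}$ in item 2 is exactly the (T1) refinement with $\phi_{j,I}$, so the premise is met.
\end{itemize}

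\textbf{Main obstacle.} The delicate step is the (dI) application in item 1, specifically obtaining the initial condition $\phi_I(v_0)$ and justifying that the \emph{initial-state} $Active$ set remains valid on the whole segment. The first point may require reading the (T1) refinement as $\phi_I(v)$ rather than $\phi_I(x)$, or adding it as a lemma. The second rests entirely on Lemmas~\ref{lem:first-upcross-active-domain} and~\ref{lem:domain-pres-infinite}, and relies on the fact that $\mathsf{flow}$ stops at the \emph{first} zero-crossing, so that no active guard becomes positive before $t_r$.
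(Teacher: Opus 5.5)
Your proposal is correct and follows essentially the same route as the paper's proof. Both proofs case-split on \textsc{(S-Letrec-Der-Fin)}/\textsc{(S-Letrec-Der-Inf)} and invert to \textsc{(T-DER-FIN)}. Both use Lemmas~\ref{lem:first-upcross-active-domain} and~\ref{lem:domain-pres-infinite} for the active guards, and soundness of (dI) plus the bridge implication for item~1. Both use (T2) with discrete preservation for the reset value, and reassemble \textsc{(T-DER-FIN)} (resp.\ \textsc{(T-DER-INF)}) for the residual.

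The paper leaves implicit the points you make explicit: that $\phi_I(v_0)$ is not literally supplied by (T1), the instantiation $\phi_I:=\phi_{j,I}$ for the new initializer, and the gap between the output refinement and the invariant on $x$. The paper also feeds $Active$ into (dI) as the evolution domain rather than using it only in the bridge. Given how (T1) is stated, that difference is harmless.
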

We prove that a well-typed continuous equation, with sufficiently safe initial conditions per (T-DER-FIN), remains well-typed after one
continuous segment, and that its generated trajectory satisfies the safety predicate throughout the
corresponding integration interval. For the finite case, we use the operational semantics to obtain a trace
$\rho(t)$ up to the first monitored event time $t_r$; Lemma~\ref{lem:first-upcross-active-domain} shows that the
conditions induced by the monitored guards hold for all
$t\in[0,t_r]$.

We then combine these domain facts with the differential invariant provided by the typing
premises (via Platzer's differential invariant rule \cite{platzer2018lfcps}) to conclude that the continuous segment is safe (1). We then show that, in the case of finite $t_r$, the assumptions for discrete type preservation are met, and that the value obtained by executing the discrete program comprising the reset preserves the invariant (2). Furthermore, we show that this constitutes an initial condition sufficiently safe so that the subsequent continuous segment can then be proven safe (3). Per (S-LETREC-DER-FIN), this value is rewritten as the initial condition for the next iteration, thus enabling an inductive proof of safety.
For the infinite case, the same argument applies, except that the monitored-domain lemma holds
for all time, yielding a direct proof of (1) and an inert $[\infty]$ residual that is typed
directly. The complete proof can be found in the extended version of the paper.




\section{Verifying the Examples}\label{sec:proofs}
We provide brief proof sketches for using our formalism to verify safety of the examples introduced in Section \ref{sec:examples}. Detailed proofs can be found in the extended version of the paper.
\subsection{Water Tank}
This system is proven safe by establishing that the invariant $(1 \le \mathit{level} \le 9)$ holds throughout execution. Per \textsc{(T-DER-FIN)}, the initial condition satisfies the invariant, so we may assume safety at the start of the first segment. During each continuous evolution, safety is maintained because the active zero-crossing guards become critical before either bound is violated, forcing a reset while the invariant still holds. Each reset then re-establishes $(1 \le \mathit{level} \le 9)$  and drives the dynamics back inward, yielding a safe starting state for the next segment. Repeating this argument over successive continuous segments and resets yields, by induction, that the invariant holds globally.

\subsection{Event-Triggered Automatic Braking}
This system is proven safe by first proving that the invariant $\square(x-\frac{v^2}{2b}<x_{obs})$ holds. Per (T-DER-FIN), the initial condition can be proven safe with the trivial differential invariant $\phi_I=\mathsf{true}$ since the zero-crossing guard, which is identical to the invariant, is negative. Since the reset expression sets the acceleration exactly equal to the acceleration required to stop at exactly the position of the obstacle, the zero-crossing guard stays at zero and thus the invariant must shift back to $\phi_I$ to be proven using the differential invariant technique.

More examples and their corresponding proofs can be found in the extended version of the paper.

\section{Related Work}\label{sec:related}

%

\paragraph{Hybrid Semantics: }
Precision is essential in hybrid semantics, and is particularly challenging when discrete program executions must be scheduled along with continuous dynamics, as imprecise semantics leads to unpredictable models of their interactions. This has been addressed in the Zélus language \cite{benveniste_hybrid_2011} which restricts interactions between continuous and discrete executions to mitigate inconsistencies that may arise from solver implementation. Benveniste \emph{et al.} characterize the hybrid semantics of Zélus with a \emph{non-standard} semantics based on infinitesimal time steps and idealized micro-steps~\cite{benveniste2012nonstandard}. In contrast, our work stays within the conventional time representation common to other formalisms of hybrid semantics but makes the semantics of event detection (zero-crossings) explicit so that sound reasoning principles can be stated directly. Differential dynamic logic (dL) provides a
compositional deductive verification framework for hybrid systems, with proof rules tailored to
hybrid programs and differential equations \cite{platzer2008dl}. Operational accounts in Ptolemy II
and related work emphasize executability, determinism, and discrete/continuous interaction (often
via superdense time) \cite{lee2005operational}. Our work is complementary: rather than proposing a
general proof calculus or a particular execution model, we formalize language-level event detection
(zero-crossings) and its interaction with continuous evolution so that metatheory and type-based
verification can rely on a precise account of events.

\paragraph{Verifying Hybrid Systems and CPS: }
Verification of hybrid systems has been extensively studied in control theory, with approaches like barrier certificates \cite{prajna_safety_2004} which can efficiently provide sufficient safety conditions. However, there is also interest in verifying systems described by programming languages as they may offer a more straightforward transfer to real-world implementations. Lingua Franca makes timing and concurrency semantics explicit for CPS using specific models of
computation, enabling deterministic-by-construction designs \cite{lin2023lf}. Recent
work also formalizes subsets of Lingua Franca in rewriting logic to support analysis in Maude
\cite{marin2024lfmaude}. These efforts reinforce the broader trend of semantics-driven verification;
our contribution addresses the analogous need for hybrid programs by establishing zero-crossings and
continuous evolution in a verification-friendly semantics. Similarly, KeYmaera X is an interactive and automated theorem prover for hybrid systems based on dL, with tactic-based proof search and a small trusted core \cite{fulton2015keymaera}. Furthermore, Plaidypvs \cite{white2024temporal} embeds dL into PVS, supporting standard dL-style proofs while additionally enabling reuse of existing PVS theories to reason over broad classes of hybrid programs \cite{white2024temporal}. While our approach in comparison is more limiting in the systems that are verifiable, we accept this compromise in the interest of pursuing automated verification via the type checker and direct execution of verified source code. Another related work is HABS, a hybrid programming language for formal modeling and verification of hybrid systems \cite{kamburjan2022hybrid}, which uses a similar synchronization mechanism. HABS translates programs into differential dynamic logic and discharges the resulting proof obligations in KeYmaera X, whereas our work targets Zélus and performs verification at compile time via refinement typing. There are also specification-based runtime monitoring approaches for cyber-physical systems~\cite{bartocci2018specification}. These methods monitor and predict CPS behaviors during simulation-time or at runtime, while our work instead targets compile-time verification through a refinement type system for statically establishing safety properties.
\paragraph{Verification via Types: }
Liquid types enable automated type-checking by restricting refinements to a
decidable fragment \cite{rondon2008liquid}, and LiquidHaskell shows these ideas scale to realistic
programs without changing the dynamic semantics \cite{vazou2014liquidhaskell}. This motivates using
types as a lightweight automated verification interface. Our setting differs in that correctness depends not
only on value-level invariants but also on the meaning of continuous evolution and event detection,
which necessitates a formal characterization of zero-crossings for type soundness and verification.

\section{Discussion}
The formalization of zero-crossings and hybrid synchronous semantics lays the critical groundwork for a language capable of simultaneous formal verification and execution. Our framework interprets a hybrid program's execution as a stream of alternating continuous and discrete executions. This enables invariants to be proven via induction. Due to the unique semantics induced by the Zélus zero-crossing detector $\mathsf{up}()$, establishing sufficient conditions to successfully prove safety via induction proved to be surprisingly complex. Nevertheless, this method provides an approach to syntactically verify the correctness of a synchronous program with hybrid components using tactics inspired by differential dynamic logic \cite{platzer2018lfcps}. Compared to \emph{dL}, synchronizing discrete events is more challenging due to our handling of zero-crossings. We gain an executable, compositional language-level proof interface but lack an explicit evolution-domain constraint $Q$ as in \emph{dL}. Compared to \emph{hybrid automata}, we gain determinism and code alignment, but must handle low-level event corner cases more explicitly. Our current formalization handles discrete transitions induced by zero-crossing events of continuous expressions, but does not yet account for inherently discrete or exogenous events outside that mechanism.

Although the restriction of our predicates to strictly invariant properties allowed for simpler proofs, it may be useful to expand the proof rules to account for other temporal operators or logics, like Signal Temporal Logic which ranges over time intervals. Furthermore, the formalism would benefit greatly from automation in an SMT-based type system like that of Liquid Haskell \cite{vazou2014liquidhaskell}, or combined with MARVeLus \cite{chen_synchronous_2024} to form a complete, automatically verifiable hybrid subset of Zélus. Nevertheless, in its current state, our formalism demonstrates that it is possible to characterize zero-crossings and synchronous hybrid programs that rely on them to verify safety properties.



\section*{Acknowledgments}
The authors would like to thank Timothy Bourke, Gr\'egoire  Bussone, Charles de Haro, Paul Jeanmaire and Partha Roop for interesting discussions on earlier versions of this paper.
This work was funded in part by National Science Foundation grants CCF-2348706 and CCF-2426474, and the U.S. Federal Aviation Administration through the FAST Grant Program, Grant Number NG-FAS-0016, Funding Opportunity Number 23-FAA-FAST-001 under the supervision of Joshua Glottmann. Any opinions, findings, conclusions or recommendations expressed in this material are those of the authors and do not necessarily reflect the views of the FAA. 
%
%
%
\newpage
\bibliographystyle{splncs04}
\bibliography{refs}
\newpage
\appendix 
\section{Case Distinction}
\label{app:cases}
We have identified the following 7 cases for $x<a$:
\begin{enumerate}[label=\Alph*]
    \item $f$ is strictly negative in a ball left of $a$. Formally:
    $\ell < a$ and $\exists \eta>0, \forall x \in [a-\eta, a), f(x)<0$.\\
    Examples: $(x-a)^{2k+1}$, $-|x-a|^k$, $-\exp(-\frac1{(x-a)^2})$.
    \item $f$ is strictly positive in a ball left of $a$. Formally:
    $\ell < a$ and $\exists \eta>0, \forall x \in [a-\eta, a), f(x)>0$.\\ 
    Examples: $(x-a)^{2k}$, $|x-a|^k$, $\exp(-\frac1 {(x-a)^2})$.
    \item $f$ is both strictly positive and strictly negative in any ball left of $a$. Formally:
    $\ell<a$ and 
    $\forall \eta>0, \exists x,y \in [a-\eta, a)$ such that $f(x)>0$ and $f(y)<0$.\\
    Examples: $(x-a)^k\sin(\frac1{x-a}) $, $ \exp(-\frac1 {(x-a)^2})\sin(\frac1{x-a}) $.
    \item $f$ is negative or zero in a ball left of $a$, and is both negative and zero in any ball left of $a$. Formally:
    $\ell < a$ and $\exists \eta>0, \forall x \in [a-\eta, a), f(x)\leq 0$ and 
    $\forall \eta>0, \exists x,y \in [a-\eta, a)$ such that $f(x)<0$ and $f(y)=0$.\\
    Examples: $-(x-a)^{2k}\sin^2(\frac1{x-a}) $, $ -\exp(-\frac1 {(x-a)^2})\sin^2(\frac1{x-a}) $.
    \item $f$ is positive or zero in a ball left of $a$, and is both positive and zero in any ball left of $a$. Formally:
    $\ell < a$ and $\exists \eta>0, \forall x \in [a-\eta, a), f(x)\geq 0$ and
    $\forall \eta>0, \exists x,y \in [a-\eta, a)$ such that $f(x)>0$ and $f(y)=0$.\\
    Examples: $(x-a)^{2k}\sin^2(\frac1{x-a}) $, $ \exp(-\frac1 {(x-a)^2})\sin^2(\frac1{x-a}) $.
    \item $\ell = a \in\mathbb{R}$. This is equivalent to $\forall x \in [\ell; b], x=0$.
    \item $\ell = a = -\infty$.
\end{enumerate}

The 7 cases for  $x>b$ are symmetric:
\begin{enumerate}
    \item $f$ is strictly positive in a ball right of $b$. Formally:
    $b < u$ and $\exists \eta>0, \forall x \in (b, b+\eta], f(x)>0$.\\
    Examples: $(x-b)^k$, $|x-b|^k$, $\exp(-\frac1{(x-b)^2})$.
    \item $f$ is strictly negative in a ball right of $b$. Formally:
    $b < u$ and $\exists \eta>0, \forall x \in (b, b+\eta], f(x)<0$.\\
    Examples: $-(x-b)^k$, $-|x-b|^k$, $-\exp(-\frac1 {(x-b)^2})$.
    \item $f$ is both strictly positive and strictly negative in any ball right of $b$. Formally:
    $b<u$ and $\forall \eta>0, \exists x,y \in (b, b+\eta]$ such that $f(x)>0$ and $f(y)<0$.\\
    Example: $(x-b)^k\sin(\frac1{x-b})$.
    \item 
    $f$ is positive or zero in a ball right of $b$, and is both positive and zero in any ball right of $b$. Formally:
    $b < u$ and $\exists \eta>0, \forall x \in (b, b+\eta], f(x)\geq 0$ and 
    $\forall \eta>0, \exists x,y \in (b, b+\eta]$ such that $f(x)>0$ and $f(y)=0$.\\
    Example: $(x-b)^{2k}\sin^2(\frac1{x-b})$. 
    \item 
    $f$ is negative or zero in a ball right of $b$, and is both negative and zero in any ball right of $b$. Formally:
    $b < u$ and $\exists \eta>0, \forall x \in (b, b+\eta], f(x)\leq 0$ and 
    $\forall \eta>0, \exists x,y \in (b, b+\eta]$ such that $f(x)<0$ and $f(y)=0$.\\
    Example: $ -(x-b)^{2k}\sin^2(\frac1{x-b}) $. 
    \item $b = u \in\mathbb{R}$. This is equivalent to $\forall x \in [a; u], x=0$.
    \item $b = u = +\infty$.
\end{enumerate}

The 7 cases for $x<a$ and the 7 cases $x>b$ give rise to 49 cases, most of them subdivided into a passing subcase and a staying subcase, bringing the total to 85. 
All those cases are summarized in Figure~\ref{fig:zero-crossing-cases}. Line G and column 7 do not have passing subcases, because it does not make sense to have $a=b$ with $a=-\infty$ or $b=+\infty$.

We now turn our interest to understanding the different possibilities for $f$ when $x<a$ and $x>b$. We identify 7 different cases when $x<a$ (numbered A to G), as well as 7 different cases when $x>b$ (numbered 1 to 7). All cases are explained using examples in Appendix~\ref{app:cases} and they are also summarized in Figure~\ref{fig:zero-crossing-cases}. In examples of this section, when a function is undefined at a point, by convention we extend it (by continuity) by zero at this point.



\subsection{Possible Formal Definitions of a Zero-Crossing}
Now that we have identified all possible cases, 
We are now ready to decide which cases should be zero-crossings, and which cases should not. 
But first we identify desirable properties of the definition.



\subsubsection{Formal definitions of Zero-Crossings}

To respect Desirable Property 2 in Section \ref{sec:zc}, we eliminate columns 2, 5, 6 and 7; as well as lines B, E, F and G. We believe that A1, A4, D1 and D4 should clearly be zero-crossings. We believe that column 3 and line C are more controversial as to whether they should have a zero-crossing, or not, or throw an error.

Given those ideas, if we limit the zero-crossings to cases A1, A4, D1 and D4, we can come up with the following definition of a zero-crossing:
\begin{definition}
$z\in\mathbb{R}$ is a zero-crossing for function $f$ if and only if there exist $a\in\mathbb{R}$ and $b\in\mathbb{R}$ with $a\leq b$ and $z=b$ such that:
\begin{enumerate}
\item $\forall x\in [a, b], f(x)=0$;
\item
$\forall\epsilon>0, \exists x\in[a-\epsilon; a), f(x)<0$;
\item
$\forall\epsilon>0, \exists x\in(b; b+\epsilon], f(x)>0$;
\item 
$\exists\eta>0, \forall x\in[a-\eta; a), f(x)\leq 0$;
\item
$\exists\eta>0, \forall x\in(b; b+\eta], f(x)\geq 0$.
\end{enumerate}
\end{definition}

Now, if we consider that beyond cases A1, A4, D1 and D4, cases A3, C1, C3, C4 and D3 should also be considered zero-crossings, we can come up with a looser definition of a zero-crossing, only including conditions (i), (ii) and (iii), which results in the definition that appears in the main paper:
\begin{definition}
$z\in\mathbb{R}$ is a zero-crossing for function $f$ if and only if there exist $a\in\mathbb{R}$ and $b\in\mathbb{R}$ with $a\leq b$ and $z=b$ such that:
\begin{enumerate}
\item $\forall x\in [a, b], f(x)=0$;
\item
$\forall\epsilon>0, \exists x\in[a-\epsilon; a), f(x)<0$;
\item
$\forall\epsilon>0, \exists x\in(b; b+\epsilon], f(x)>0$.
\end{enumerate}
\label{def:second-def-for-zc2}
\end{definition}

Both definitions respect Desirable Properties 1, 2 and 3. 
We can also imagine hybrid definitions that include conditions (i), (ii), (iii) and (iv); or (i), (ii), (iii) and (v) of Definition 1. In the rest of this paper we will consider Definition 2 in our semantics for zero-crossings.  

Another decision is which point should be the zero-crossing in the staying case where $a<b$. We would typically want the zero-crossing to be $a$ or $b$, but which one is almost purely a matter of convention. For this paper, we will pick $b$ (but the subsequent developments can easily be adapted if picking $a$).


\section{Proof of Lemma \ref{lem:first-upcross-active-domain}}\label{app:lem1}



\begin{proof}
Define $g_j(t) \triangleq u_j(\rho(t)(x))$. Each $g_j$ is continuous in $t$
(because $\rho$ is continuous and $u_j$ is a continuous expression).
    Fix any $j$ where $j$ is the index of an active guard. We prove $\forall t\in[0,t_r],\ g_j(t)\le 0$ by contradiction. Assume there exists $t_1 \in [0,t_r]$ such that $g_j(t_1) > 0$.

    First, we show that $g_j$ is strictly negative at some time strictly after $0$:
    \begin{itemize}
        \item If $g_j(0)<0$, then $g_j$ is negative at time $0$.
        \item If $g_j(0)=0$ and $\dot g_j(0)<0$, then by differentiability at $0$ there exists $\delta>0$ such that for all $t\in(0,\delta]$, $g_j(t)<0$.
    \end{itemize}

    In either case, there exists some $t_- \in [0,t_1)$ with $g_j(t_-)<0$.

    Assume towards a contradiction that $g_j$ becomes positive at or before $t_r$. Then there exists $t_1 \le t_r$ such that $g_j(t_1)>0$. Let
    \[
    \tau \triangleq \inf\{t\in[0,t_1]\mid g_j(t)>0\}.
    \]
    By continuity, $g_j(\tau)=0$, and by definition of $\tau$ there are arbitrarily close times after $\tau$ where $g_j$ is positive. Also, since $g_j(t_-)<0$ for some $t_-<\tau$, there are negative values somewhere before $\tau$.

    Now consider the maximal interval on which $g_j$ is $0$ ending at $\tau$: choose any $a\le \tau$ such that $g_j(t)=0$ for all $t\in[a,\tau]$ and $a$ is minimal with this property. Then:
    \begin{itemize}
        \item $\forall t\in[a,\tau]$, $g_j(t)=0$ by construction;
        \item for every $\varepsilon>0$, there exists $t\in[a-\varepsilon,a)$ with $g_j(t)<0$
(by minimality of $a$);
        \item for every $\varepsilon>0$, there exists $t\in(\tau,\tau+\varepsilon]$ with $g_j(t)>0$
(by definition of $\tau$).
    \end{itemize}
    These are exactly the conditions of Def.~2, so $\tau$ is an upward zero-crossing
time of $g_j$. Finally, because $g_j(t_1)>0$ occurs at some $t_1\le t_r$, we have $\tau < t_r$,
so we have found an upward zero-crossing strictly before $t_r$. This contradicts
that $t_r$ is the \emph{first} upward zero-crossing among the monitored signals
(Def.~4). Hence $g_j(t)\le 0$ for all $t\in[0,t_r]$.

Since $j$ was arbitrary among the active guards, all conjuncts produced by
$\mathrm{Active}(\rho(0)(x),\{u_1,\dots,u_n\})$ are preserved up to $t_r$.

Moreover, since $t_r$ is an upward zero-crossing time of $g_i$ (Def.~2), in particular
$g_i(t_r)=0$, i.e.\ $u_i(\rho(t_r)(x))=0$.
\end{proof}
\paragraph{Remark (Non-active guards may trigger first).}
Note that Lemma~\ref{lem:first-upcross-active-domain} does \emph{not} require
that the first triggered event comes from an active guard. Even if some
initially non-active guard triggers the first up-crossing at time $t_r$, the
proof above shows that \emph{every} guard that \emph{was} active at time $0$
still cannot become positive before $t_r$ (otherwise it would generate an
earlier up-crossing), matching the stated intuition.
\section{Proof of Lemma \ref{lem:domain-pres-infinite}}\label{app:lem2}

\begin{proof}
    Fix an active guard index $j$ and let $g_j(t)=u_j(\rho(t)(x))$ as in~\ref{app:lem1}.
As in Lemma~\ref{lem:first-upcross-active-domain}, activity at time $0$ implies
$g_j$ is negative at some time $t_- \ge 0$.

Assume for contradiction that $g_j(t_1)>0$ for some $t_1\ge 0$.
Define $\tau = \inf\{t\in[0,t_1]\mid g_j(t)>0\}$.
The same continuity argument as before shows that $\tau$ satisfies Def.~2,
hence $\tau$ is an upward zero-crossing time of $g_j$.

But this contradicts the assumption that the flow segment has no monitored
upward zero-crossing at any finite time (the $+\infty$ outcome of $\mathrm{flow}$).
Therefore $g_j(t)\le 0$ for all $t\ge 0$.
\end{proof}
\section{Proof of Continuous Type Preservation}\label{app:preservation}
\begin{proof}
By structural induction on the operational semantics derivation of $S;\sigma \vdash e \stackrel{(\rho,t_r)}{\hookrightarrow} e'$. 

\paragraph{Case \textsc{S-LETREC-DER-FIN}.}
Assume $\Gamma \vdash e : \{v:b \mid \square~ \phi(v)\}$ and
      $S;\sigma \vdash e \stackrel{(v(t),t_r);v_r}{\hookrightarrow} e'$ is derived by \textsc{S-LETREC-DER-FIN},
      where \begin{enumerate}[leftmargin=*]
          \item $flow(\textsf{der }x=e_f,\;v_0,\;\{u_1(x),\dots,u_n(x)\}) \Downarrow (\rho,t_r)$ with $t_r<+\infty$,
        \item there exists an index $j$ whose guard triggers at $t_r$ (first upward zero-crossing),
        \item and the reset branch steps in the discrete fragment under the post-flow store $S;\sigma,\,[x\mapsto\rho(t_r)(x)::\mathsf{nil}]$, producing a rewritten reset expression $de'_{r,j}$,
and the equation residual $e'$ is obtained by replacing the initial value with the
post-reset state and the reset branch to $de'_{r,j}$, where $v_{r,j}$ is emitted.
      \end{enumerate}

\paragraph{(1) Safety of the finite continuous trajectory.}\textsc{T-DER-FIN} must be the last (non-subtyping) rule used. By inversion, we obtain its premises.
\begin{description}
\item[(T1)] There exists an auxiliary invariant $\varphi_I$ such that typing $e_0$ establishes:
(i) the safety predicate $\varphi$ at time $0$,
(ii) the differential-invariant side condition needed by Platzer's rule (dI) for $\varphi_I$,
and (iii) a bridge implication from $\varphi_I$ together with the $\text{Active}(v,u_1(x) ... u_n(x))$ to $\varphi$.
\end{description}
From the flow premise and Lemma~\ref{lem:first-upcross-active-domain} due to a finite $t_r$, we have:
\[
\forall t\in[0,t_r].\ \text{Active}(\rho(t)(x))
\qquad\text{and}\qquad
u_i(\rho(t_r)(x))=0.
\]
Now apply the soundness of Platzer's differential invariant rule (dI) \cite{platzer_complete_2017}:
premise (T1) provides the initial establishment of $\varphi_I$ together with its differential
side condition, and Lemma~\ref{lem:first-upcross-active-domain} supplies $\text{Active}$ throughout $[0,t_r]$.
Therefore $\varphi_I$ holds at all times in $[0,t_r]$. Finally by assumption from (T1), 
\[
\square_{[0, t_r]}\phi(v(t))
\]

We then show $\Gamma_c \vdash e' : \{v:b\mid \Box \phi(v)\}$ by re-applying \textsc{T-Der-Fin}
to the rewritten program.

\smallskip
\noindent\emph{(Body premise.)}
The typing premise for the body (call it (T3)) is unchanged: the binder for $x$ in the residual
remains $x:\{v:b\mid \phi(v)\}$, hence the same derivation establishes the body typing.

\smallskip
\noindent\emph{(Reset premise and updated initializer.)}
Since branch $j$ triggers at time $t_r$, then we have $u_j(x_r)=0$ from Lemma~\ref{lem:first-upcross-active-domain}. Together with (1) we have $x_r\models \phi$.
Therefore, the runtime store used to evaluate the reset expression,
$\sigma[x\mapsto x_r]$, satisfies the assumptions of the typing premise (T2) for branch $i$,
which types $de_{r,j}$ under
\[
x:\{v:b\mid \phi(v)\wedge u_i(v)=0\}.
\]
Since the predicate syntax of the hybrid type system is a subset of those in the purely discrete language, correspondence of the typing and term environments in the hybrid context implies correspondence in the discrete context as well.
By the assumed discrete preservation for the non-continuous fragment,
from $S;\sigma,[x^m\mapsto \rho(t_r)(x^m)::\mathsf{nil}] \vdash de_{r,j}\stackrel{v_{r,j}}{\hookrightarrow}de_{r,j}'$ we obtain that the result $v_{r,j}$
satisfies the refinement postcondition stated by (T2). This also uses the fact that $\mathsf{tl}(\square p)\equiv \square p$. In particular, since (T2) bundles
the conjunct $\phi(v)$ in the type of $de_{r,i}$, we conclude
\[
v_{r,j} \models \phi.
\]
All other reset expressions retain their previous typing judgments due to them not stepping. 
Finally, \textsc{S-Letrec-Der-Fin} updates the initializer of the residual program to be this
post-reset value $v_{r,j}$. Hence, the initializer premise required to re-apply \textsc{T-Der-Fin}
holds for $ce'$.

\smallskip
\noindent
Combining the unchanged body premise with the updated initializer and the preserved reset-branch
typing, all premises of \textsc{T-Der-Fin} hold for $ce'$, and therefore
\[
\Gamma_c \vdash v_{r,j} : \{v:b \mid \Box \phi(v)\}  \qquad and \qquad \Gamma_c \vdash ce' : \{v:b \mid \Box \phi(v)\}.  
\]

This completes the S-LETREC-DER-FIN case.

\paragraph{Case \textsc{S-LETREC-DER-INF}.}
The last operational rule is \textsc{S-Der-Inf}. Thus
\[
flow(\mathsf{der}\ x=e_f,\ x_0,\ \{u_1,\dots,u_n\}) \Downarrow (\rho,+\infty),
\]
and the residual is the distinguished infinite marker $[\infty]$.

We must show:
\begin{enumerate}
\item $\llbracket v\rrbracket \rho \vDash \Box_{[0,+\infty)} \phi(v)$, i.e. $\forall t\ge 0.\ \rho(t)(x)\models \phi$;
\item $\Gamma_c \vdash [\infty] : \{v:b \mid \Box \phi(v)\}$.
\end{enumerate}

\paragraph{(1) Safety of the infinite continuous trajectory.}
By inversion, the last typing rule is again \textsc{T-Der-Fin} (or the corresponding continuous typing
rule for derivatives), yielding premise (T1) with an auxiliary invariant $\varphi_I$ and a bridge.
From Lemma~\ref{lem:domain-pres-infinite} we have domain preservation for all time:
\[
\forall t\ge 0.\ \text{Active}(\rho(t)(x)).
\]
By soundness of (dI), the differential side condition from (T1) implies that $\varphi_I$ is preserved
along $\rho$ for all $t\ge 0$ (since $\text{Active}$ holds for all $t\ge 0$). Applying the bridge pointwise
yields:
\[
\forall t\ge 0.\ \rho(t)(x)\models \varphi,
\]
i.e.\ $\rho \models \Box_{[0,+\infty)}\varphi$.

\paragraph{(2) Residual typing.}
The residual is $[\infty]$ (or equivalent). By the typing rule for the inert infinite residual,
we derive directly:
\[
\Gamma_c \vdash [\infty] : \{v:b \mid \Box\varphi(v)\}.
\]

\noindent
This concludes both cases.
\end{proof}
\section{Detailed Proofs of Hybrid Examples} 
\label{app:exampleproofs}
We provide detailed proofs of the two examples presented in the text, and introduce two additional examples.
\subsection{Proof of the Water Tank Example}
\begin{align*}
\text{T3}\equiv &~\Gamma,\;
  \mathit{(level, flow)} : \{(l,f):fl\times fl \mid \Box(1 \le l \le 9)\}
  \;\vdash\;\\&
  level : \{l : fl \mid \Box(1 \le l \le 9)\}
\end{align*}

\begin{align*}
\textsc{T1}\equiv&~\Gamma,\;
  \mathit{level} : \{v:fl \mid \square(1 \le v \le 9)\},\;
  \dot{\mathit{level}} : \{v:fl \mid v = \square\mathit{flow}\}
  \;\vdash\;\\&
  (5, 5) :
  \Bigl\{
    (l,f):fl\times fl \mid
      \square(1 \le l \le 9)
      \land \\ &\dot{True}
      \land
      \bigl(
        (True \land \square(1\le l \le 9))
        \Rightarrow \square
        (1 \le level \le 9)
      \bigr)
  \Bigr\}
\end{align*}

\begin{align*}
\textsc{T2$_1$}\equiv
  &~\Gamma,\;
  \mathit{level} :
    \{\square (v \mid 1 \le v \le 9) \land v = 9\},\;
  \dot{\mathit{level}} : \{v \mid v = \square\mathit{flow}\}
  \;\vdash\;\\
  &(9,-5) :
  \Bigl\{
    (l,f):fl\times fl \mid
      \square(1 \le l \le 9)
      \land \square\dot{(level \le 9)}
      \land \\&
      \bigl(
        (\square(level \le 9) \land \square(1 \le l \le 9))
        \Rightarrow
        \square(1 \le level \le 9)
      \bigr)
  \Bigr\}
\end{align*}

\begin{align*}\textsc{T2$_2$}\equiv~
  &\Gamma,\;
  \mathit{level} :
    \{v \mid \square(1 \le v \le 9) \land v = 1\},\;
  \dot{\mathit{level}} : \{v \mid v = \square\mathit{flow}\}
  \;\vdash\;\\&
   (1,5) :
  \Bigl\{
    (l,f):fl\times fl \mid
      \square(1 \le l \le 9)
      \land \square\dot{(1 \le level)}
      \land \\&
      \bigl(
        (\square(1 \le level) \land \square(1 \le l \le 9))
        \Rightarrow
        \square(1 \le level \le 9)
      \bigr)
  \Bigr\}
\end{align*}

\begin{mathpar}
\inferrule*[lab=\textsc{T-DER-FIN}]
{
  \textsc{T1}
  \\
  \textsc{T2$_1$}
  \\
  \textsc{T2$_2$}
  \\
  \textsc{T3}
}
{
  \Gamma \vdash
  \text{let rec }
    \text{der (level, flow)} : \{(l,f) : fl \times fl \mid \Box(1 \le l \le 9)\} \\ = \text{(flow, 0) init (5
, 5)}~
    \text{reset}
    \substack{\mid\ \text{up(level-9)} \to (\text{last level},-5)\\\mid\ \text{up(-level+1)} \to (\text{last level}, 5)} \\
  \;\text{in}\; \text{level}
  : \{l : fl \mid \Box(1 \le l \le 9)\}
}
\end{mathpar}

\subsection{Proof of the Event-Triggered Automatic Braking Example}
Let $prog=$
\begin{lstlisting}
let x_obs = 5.; brake = -0.2
let rec der (x,v,a)  =  
    (v init 0.,
     a init 1.
     0. init 1.) reset 
  | up(x -. (v *. v /. (2. *. brake)) +. 0.1 -. x_obs) -> 
      (last x, last v, brake)
  | up(-. v ) -> (last x, last v, 0.)
in (x, v, a)
\end{lstlisting}

The program initially starts with both guards $x-\frac{v^2}{2\times brake}+0.1-x_{obs}$ and $-v$ negative; thus the differential invariant is not necessary to prove safety since these alone imply the main invariant.

Upon a zero-crossing on $x-\frac{v^2}{2\times brake}+0.1-x_{obs}$, the acceleration is set to $brake$. However, since $x$ and $v$ retain their previous values and the derivative of this guard is zero, this guard is no longer active. Therefore, use the differential invariant to prove safety. This is done by observing that, since $brake$ is negative, $v-\frac{va}{brake}\leq 0$ only if $a$ is at least as negative as $brake$. This is obviously true so the differential invariant is provable.

Upon a zero-crossing on $-v$, the acceleration is set to $0$. As before, neither this guard nor the other can be proven to be active after reset, so the differential invariant is used again. Since $v=0$, $v-\frac{va}{brake}\leq 0$ trivially and furthermore it is provable to stay at zero because acceleration is also zero, thus guaranteeing safety.

Applying (T-DER-FIN):
\begin{mathpar}
{T3}\equiv

{
  \Gamma,\;
  \mathit{(x, v, a)} : \{(v_x, v_v, v_a):fl\times fl\times fl \mid \Box(v_x - \frac{v_v^2}{2\times brake}<x_{obs})\}}\\
  {\;\vdash\;
  (x, v ,a) : \{(v_x, v_v, v_a) : fl\times fl \times fl \mid\Box(v_x - \frac{v_v^2}{2\times brake}<x_{obs})\}
}
\end{mathpar}

For $e_0=(0, 1, 1)$, select $\phi_I=True$
\begin{mathpar}
\inferrule*[lab=\textsc{T1$\equiv$}]
{
  \Gamma,\;
  \mathit{(x,v,a)} : \{(v_x, v_v, v_a):fl\times fl \times fl \mid \Box(v_x-\frac{v_v^2}{2\times brake}<x_{obs})\},\;\\
  \mathit{(\dot{x}, \dot{v}, \dot{a})} : \{(w_x, w_v, w_a):fl\times fl\times fl \mid \Box(w_x, w_v, w_a)=(v, a, 0)\}
  \;\vdash\;\\
  (0, 1, 1) :
  \Bigl\{
    (v_x, v_v, v_a):fl\times fl\times fl \mid
      \Box(v_x - \frac{v_v^2}{2\times brake}<x_{obs})
      \land \\ \dot{True}
      \land
      \bigl(
        (True \land \Box(v_v \geq 0 \land v_x - \frac{v_v^2}{2\times brake}+0.1\leq x_{obs}))
        \Rightarrow \\
        \Box(v_x - \frac{v_v^2}{2\times brake}<x_{obs})
      \bigr)
  \Bigr\}
}
{}
\end{mathpar}

For this reset, choose $\phi_I\equiv (v_x - \frac{v_v^2}{2\times brake}< x_{obs})$
\begin{mathpar}
\inferrule*[lab=\textsc{T2$_1\equiv$}]
{
  \Gamma,\;
  \mathit{(x,v,a)} : \{(v_x, v_v, v_a):fl\times fl \times fl \mid \Box((v_x-\frac{v_v^2}{2\times brake}<x_{obs}) \land \\ (v_x - \frac{v_v^2}{2\times brake}+0.1 = x_{obs}))\},\;\\
  \mathit{(\dot{x}, \dot{v}, \dot{a})} : \{(w_x, w_v, w_a):fl\times fl\times fl \mid \Box(w_x, w_v, w_a)=(v, a, 0)\}
  \;\vdash\;\\
  (\mathsf{last~}x, \mathsf{last~}v, brake) :
  \Bigl\{
    (v_x, v_v, v_a):fl\times fl\times fl \mid
      \Box(v_x - \frac{v_v^2}{2\times brake}<x_{obs})
      \land \\ \Box v_v- \frac{v_v v_a}{b}\leq 0
      \land
      \bigl(
        (\Box(v_x - \frac{v_v^2}{2\times brake}< x_{obs}) \land True)
        \Rightarrow \\
        \Box(v_x - \frac{v_v^2}{2\times brake}<x_{obs})
      \bigr)
  \Bigr\}
}{}
\end{mathpar}

Choose the same $\phi_I$ for this reset
\begin{mathpar}
\inferrule*[lab=\textsc{T2$_2\equiv$}]
{
  \Gamma,\;
  \mathit{(x,v,a)} : \{(v_x, v_v, v_a):fl\times fl \times fl \mid \\ \Box((v_x-\frac{v_v^2}{2\times brake}<x_{obs}) \land (v_v=0))\},\;\\
  \mathit{(\dot{x}, \dot{v}, \dot{a})} : \{(w_x, w_v, w_a):fl\times fl\times fl \mid \Box(w_x, w_v, w_a)=(v, a, 0)\}
  \;\vdash\;\\
  (\mathsf{last~}x, \mathsf{last~}v, 0) :
  \Bigl\{
    (v_x, v_v, v_a):fl\times fl\times fl \mid
      \Box(v_x - \frac{v_v^2}{2\times brake}<x_{obs})
      \land \\ \Box v_v- \frac{v_v v_a}{b}\leq 0
      \land
      \bigl(
        (\Box(v_x - \frac{v_v^2}{2\times brake}< x_{obs}) \land True)
        \Rightarrow
        \\ \Box(v_x - \frac{v_v^2}{2\times brake}<x_{obs})
      \bigr)
  \Bigr\}
}
{}
\end{mathpar}

\begin{mathpar}
\inferrule*[lab=\textsc{T-DER-FIN}]
{
  \textsc{T1}
  \\
  \textsc{T2$_1$}
  \\
  \textsc{T2$_2$}
  \\
  \textsc{T3}
}
{
  \Gamma \vdash
  prog
  : \{(v_x, v_v, v_a):fl\times fl \times fl \mid \square(v_x-\frac{v_v^2}{2\times brake}<x_{obs})\}
}
\end{mathpar}

\subsection{Additional Examples and Proofs}\label{app:additional}

\subsubsection{Decreasing Sawtooth Function}
We begin with a program which produces a "decreasing sawtooth" signal. The signal's level increases continuously in a linear fashion until it reaches an upper limit ($x=0$), after which it is reset to a lower value which is lower than the upper limit and decreases with each reset. A useful property to prove is that the signal's level never exceeds 0.
\begin{lstlisting}
let rec der x = 1. init -1 reset up(x) -> 
   (let rec xr = -2. fby (xr -. 1.) in xr)
in x
\end{lstlisting}
Ensure: $x\leq 0$ at all times
\\
The invariant, $x\leq 0$ can be proven directly by using the fact that the zero-crossing detector $\mathsf{up}(x)$ is active if the continuous segment begins with $x$ at any value less than 0. This is clearly true for the initial condition, a constant -1. The reset expression can be proven to always be negative using discrete typing rules. Specifically, the (T-LETREC) and (T-FBY) rules from \cite{chen_synchronous_2024} can be used to show that, since $x_r$ was initialized to $-2$, then subtracting $1$ from $x_r$ will always yield a negative number thus allowing us to show $\square (x_r<0)$ and thus not only the invariant is satisfied for the reset but that the guard remains active after the reset (since $x<0$).
\subsubsection{Proof of the Sawtooth Example}~\\

\begin{align*} &\text{T1}\equiv \\ 
  &\Gamma,\;
  x : \{v:\mathsf{fl} \mid \square v\le 0\},\;
  \dot{x} : \{v:\mathsf{fl} \mid \square v = 1.0\}
  \;\vdash\;\\
  &-1.0 :
  \Bigl\{
    v:\mathsf{fl} \;\Bigm|\;
      \square(v \le 0)
      \land \dot{True}
      \land
      \bigl(
        (True \land \square(v\le 0))
        \Rightarrow
        \square(x\le0)
      \bigr)
  \Bigr\}
\end{align*}

\begin{align*}
&\text{T2}\equiv \\
  &\Gamma,\;
  x : \{v:\mathsf{fl} \mid \square v\le0\},\;
  \dot{x} : \{v:\mathsf{fl} \mid \square v = 1.0\},\; x_r:\{v |  \square v< 0\}
  \;\vdash\;\\
  &x_r :
  \Bigl\{
    v:\mathsf{fl} \;\Bigm|\;
      \square(v \le 0)
      \land \dot{True}
      \land
      \bigl(
        (True \land \square(v\le0))
        \Rightarrow
        \square(x\le0)
      \bigr)
  \Bigr\}
\end{align*}

\begin{align*}
&\text{T3}\equiv\\
      &\Gamma,\;
  x : \{v:\mathsf{fl} \mid \Box\,v \le 0\}
  \;\vdash\;
  x : \{v:\mathsf{fl} \mid \Box\,v \le 0\}
\end{align*}

\begin{mathpar}
\inferrule*[lab=\textsc{T-DER-FIN}]
{
  \textsc{T1} \\ \textsc{T2} \\ \textsc{T3}
}
{
  \Gamma \vdash
  \text{let rec der } x : \{v:\mathsf{fl} \mid \Box\,v \le 0\} = \text{1.0 init -1.0 reset}\\
  \;\;|\;\; \text{up}(x) \rightarrow \text{(let rec $x_r$ : $\{v | v<0\}$= -2. fby ($x_r$ -. 1.) in $x_r$)} \\
  \texttt{ in } x
  :
  \{v:\mathsf{fl} \mid \Box\,v \le 0\}
}
\end{mathpar}
\noindent
\subsubsection{Proof that $\square (x_r<0)$:}
Define $\tau\equiv\{v:\mathsf{float}~|~\square(v<0)\}$ for conciseness.
\noindent
$T_{2,4}\equiv\Gamma,x_r:\tau\vdash(x_r-.~1.):\{v:\mathsf{float}~|~\square(v=x_r-1)\}$
\\
(Proven trivially via type synthesis of basic arithmetic operators)\\

\noindent
$T_{2,3}\equiv$
\small{
\begin{prooftree}
    \AxiomC{}
    \RightLabel{(T-CONST)}
    \UnaryInfC{$\Gamma,x_r:\tau\vdash-2.:\{v:\mathsf{float}~|~\square(v=-2)\}$}
    \AxiomC{$T_{2,4}$}
    \RightLabel{(T-FBY)}
    \BinaryInfC{$\Gamma,x_r:\tau\vdash-2.\text{ fby }(x_r-.~1.):\{v:\mathsf{float}~|~\mathsf{hd}(v=-2)\land \bigcirc \square (v=x_r-1)\}$}
\end{prooftree}}
(Trivial application of subtyping applied for $\{v:\mathsf{float}~|~\square(v=-2)\}\preceq\{v:\mathsf{float}~|~(v=-2)\}$
\\
$T_{2,1}\equiv$
\begin{prooftree}
    \AxiomC{$T_{2,3}$}
    \AxiomC{$\{v:\mathsf{float}~|~\mathsf{hd}(v=-2)\land \bigcirc \square (v=x_r-1)\}\preceq \tau$}
    \AxiomC{$\Gamma,x_r:\tau\vdash\tau$}
    \RightLabel{(T-SUB)}
    \TrinaryInfC{$\Gamma,x_r:\tau\vdash-2.\text{ fby }(x_r-.~1.):\tau$}
\end{prooftree}
(Subtyping obligation is proven by noting that $\forall x_r:\mathsf{float}~.~x_r<0\implies (x_r - 1) < 0$, and that $2<0$).
\\
\noindent
$T_{2,2}\equiv$
\begin{prooftree}
    \AxiomC{$(\Gamma,x_r:\tau)(x_r)=\tau$}
    \RightLabel{(T-VAR)}
    \UnaryInfC{$\Gamma,x_r:\tau\vdash x_r:\tau$}
\end{prooftree}

\begin{prooftree}
    \AxiomC{$\Gamma,x_r:\tau\vdash \tau$}
    \AxiomC{$T_{2,1}$}
    \AxiomC{$T_{2,2}$}
    \RightLabel{(T-LETREC)}
    \TrinaryInfC{$\Gamma \vdash \text{let rec }x_r:\tau=-2.\text{ fby }(x_r-.~1.) \text{ in } x_r:\tau$}
\end{prooftree}

\subsubsection{Bouncing Ball}
The bouncing ball is a classic hybrid system which, while not necessarily controlled by a software component, nevertheless demonstrates a system containing both continuous dynamics and a discrete event which resets the system's state. Here, the ball is released from a positive height and allowed to free-fall until it reaches the ground, at which point it bounces up with some energy lost during the bounce. We verify that the ball never travels through the ground, and that, obeying the laws of physics, does not bounce higher than its initial height. Safety of this system is proven by using both the reset condition and differential reasoning similarly to \cite{platzer2018lfcps} to ensure the ball obeys conservation of energy.
\begin{lstlisting}
let y0 = 10.0; v0 = 0.0; g = 9.81
let rec der (y, v) = 
    (v init y0,
    -.g init v0) reset up(-.y) -> (last y, -0.8 *. last v)
 in (y, v)
\end{lstlisting}
Ensure: $0 \leq y \leq y_0$
\\
The proof applies rule (T-DER-FIN) to show that the ball’s height never becomes negative and remains within the specified bounds. The bouncing ball can be proven safe by splitting the invariant; $y\geq 0$ can be proven entirely using the fact that $\mathsf{up}(-y)$ is always active. Similarly to \cite{platzer2018lfcps}, $y\leq y_0$ can be proven using the invariant $2gy\leq 2gy_0-v^2$ which is just a reformulation of the potential and kinetic energies of the ball. We verify the invariant $y\geq 0 \land 2gy \leq 2gy_0-v^2$. The invariant combines a geometric safety condition on the height with an energy bound relating height and velocity. Since $Active(y_0, \{y\})$ lets us directly prove $y\geq 0$, we select $\phi_I\equiv 2gy\leq 2gy_0-v^2$, and subsequently note that $\phi_i'\equiv -2gv\leq -2gv$ During continuous evolution under gravity, the energy component of the invariant is preserved by the trivially true differential invariant $\phi_i'$, while the zero-crossing guard prevents the height from crossing below zero. When the guard triggers at ground contact, the reset maps the state to a strictly lower-energy configuration with upward velocity (thus $(-0.8v)^2 \leq v^2\leq 2gy_0$), immediately re-establishing the invariant and providing a safe initial state for the next segment. By induction over alternating continuous evolutions and resets, the invariant holds globally.

\subsubsection{Proof of the Bouncing Ball Example}

\begin{align*}
&{T3}\equiv \\
&{\Gamma,\;
  (y,v) : \{(u, w) : fl \times fl \mid \Box(0.0 \le u \le 10.0 \land 2gy\le2gy_0 - v^2)\}
  \;\vdash\;
  }\\&{  y : \{u : fl \mid \Box(0.0 \le u \le 10.0)\}
}
\end{align*}

\begin{align*}
&{T1}\equiv \\
  &\Gamma,\;
  y : \{u \mid \square u \ge 0\},\;
  \dot{y} : \{u \mid\square u = v\},\;
  \dot{v} : \{w \mid\square w = -9.81\}
  \;\vdash\;\\
  &(10.0, 0.0) :
  \Bigl\{
    (u, w) \mid
      \square(0.0 \le u \le 10.0 \land 2gu\le2gy_0 - w^2) \\
      &\land {\square((y\le 10.0) \land (2 g y \le 2 g y_0 - v^2))}'
      \land\\
      &\bigl(
        \square(((y \le 10.0) \land (2 g y \le 2 g y_0 - v^2)) \land \square(u \ge 0.0))
        \Rightarrow
        \square(0.0 \le y \le 10.0)
      \bigr)
  \Bigr\}
\end{align*}

\begin{align*}
&\text{T2}\equiv
  \\&\Gamma,\;
  y : \{u \mid \square (u \ge 0) \land u = 0\},\;
  \dot{y} : \{u \mid \square u = v\}
  \;\vdash\;\\&
  (0.0, -0.8 \cdot v) :
  \Bigl\{
    (u, w) \mid
      \square(0.0 \le u \le 10.0 \land 2gu\le2gy_0 - w^2) \\&
      \land {\square((0.0\le y) \land (y\le 10.0) \land (2 g y \le 2 g y_0 - v^2))}'\\&
      \land
      \bigl(
        \square(((0.0\le y) \land (y\le 10.0) \land (2 g y \le 2 g y_0 - v^2)) \land True)
        \Rightarrow \\&
        \square(0.0 \le u \le 10.0 \land 2gy\le2gy_0 - v^2)
      \bigr)
  \Bigr\}
\end{align*}

\begin{mathpar}
\inferrule*[lab=\textsc{T-DER-FIN}]
{
  \textsc{T1}
  \\
  \textsc{T2}
  \\
  \textsc{T3}
}
{
  \Gamma \vdash
  \text{let rec}\
    \text{der (y,v)} : \{(u, w) : fl \times fl \mid \Box(0.0 \le u \le 10.0 \land 2gy\le2gy_0 - v^2)\} \\ = (v, -9.81) \text{ init } (10.0, 0.0) \\ \text{ up(-y) $\to$ (last y, -0.8 * last v)}
\
  \;\texttt{in}\; y
  :
  \{u : fl \mid \Box(0.0 \le u \le 10.0)\}
}
\end{mathpar}
\end{document}